\newcommand{\cF}{\mathcal{F}}
\newcommand{\bbH}{\mathbb{H}}
\newcommand{\bbS}{\mathbb{S}}
\newcommand{\bbR}{\mathbb{R}}      
\newcommand{\bbN}{\mathbb{N}}      
\newcommand{\grad}{\operatorname{grad}}
\newcommand{\rig}{R}
\newcommand{\lef}{L}
\newtheorem{Thm}{Theorem}[section]
\newtheorem{Remark}[Thm]{Remark}
\begin{document}
\begin{abstract}
We study the motion of a rigid elastic rod, initially set in its relaxed state along a spacelike geodesic, in an expanding Friedmann-Lema\^itre-Robertson-Walker universe. This leads to an initial boundary value problem (IBVP) for a nonlinear wave equation whose nonlinearity depends on a parameter $\kappa \geq 0$, related to the ratio between the rod's length and the cosmological scale. We show that if $\kappa$ is small enough then the solution to the IBVP is global in time and bounded, meaning that the rod's length oscillates around its initial value. For greater values of  $\kappa$, however, the solution to the IBVP blows up in finite coordinate time, indicating that the rod is infinitely stretched by the cosmological expansion. This supports the widely held belief that sufficiently small bound systems do not follow the Hubble flow, whereas larger systems may do so. Similar conclusions apply to the tethered galaxy version of this problem, where the rod is used to connect two point masses (which results in nonlinear boundary conditions for the IBVP).
\end{abstract}
%
%
\title{Elastic rigid rod in an expanding universe}
\author{Sim\~{a}o Correia, Jos\'{e} Nat\'{a}rio and Jorge Drumond Silva}
\address{Center for Mathematical Analysis, Geometry and Dynamical Systems, Mathematics Department, Instituto Superior T\'ecnico, Universidade de Lisboa, Portugal}
%
%
\maketitle
%
%
%
%
\thispagestyle{empty}
%
%
%
\section*{Introduction}
Nearly a century after Hubble's foundational discovery of the expansion of the Universe \cite{Hubble29}, the question of what kind of extended physical systems undergo this expansion remains a subject of ongoing scientific debate. The prevailing view seems to be that sufficiently small bound systems do not follow the Hubble flow (see \cite{McVittie33, ES45, DP64, NP71, Anderson95, Bonnor96, CFV98, Bonnor99, FJ07, CG10, NLH12, PR12} and references therein).

Relativistic elasticity (see, for instance, \cite{BS03, KS04, Beig23}) provides a tractable framework for modeling extended systems bound by elastic forces. In this work, we investigate the behavior of an elastic body embedded in an expanding Friedmann-Lemaître-Robertson-Walker (FLRW) universe. For simplicity, we consider a rigid elastic rod of the type studied in \cite{Natario14, NQV18, CN19, MNS24}, initially placed in its relaxed configuration along a spacelike geodesic.
This setup gives rise to an initial boundary value problem (IBVP) for a nonlinear wave equation, where the nonlinearity depends on a dimensionless parameter $\kappa \geq 0$, representing the square of the ratio between the rod's proper length and the characteristic cosmological scale. We show that when $\kappa$ is sufficiently small the solution to the IBVP is global in time and bounded, meaning that the rod's length oscillates around its initial value. For larger values of $\kappa$, however, the solution blows up in finite coordinate time, which indicates that the rod is stretched without bound by the cosmological expansion.
Similar results are obtained in the ``tethered galaxy'' version of the problem, where the elastic rod connects two point masses. Although this variation gives rise to nonlinear boundary conditions in the IBVP, which require a more intricate analysis to establish an existence and uniqueness result, the qualitative behavior of the solutions remains essentially unchanged.

The organization of this paper is as follows. In Section~\ref{section1}, we briefly review key properties of the FLRW metrics that will be considered in this work. Section~\ref{section2} provides a concise overview of the general theory of one-dimensional rigid elastic bodies evolving in two-dimensional spacetimes, which suffices to study the motion of the rod. In Section~\ref{section3}, we formulate the IBVP and prove that its solutions exist globally in time for sufficiently small values of the parameter $\kappa$, while solutions exhibit finite-time blow-up for large enough values of $\kappa$. The special case of FLRW models with constant curvature, corresponding to the de Sitter universe, is discussed in Section~\ref{section4}. Finally, in Section~\ref{section5}, we extend our analysis to the tethered galaxy problem.

We mostly follow the conventions of \cite{MTW73}; in particular, we use a geometrized system of units, in which both the speed of light $c$ and Newton's gravitational constant $G$ are set to unity, $c=G=1$.
%
%
%
\section{FLRW universes}\label{section1}
Arguably, the simplest models of expanding universes are given by the FLRW metrics,
\begin{equation} \label{FLRWmetric}
ds^2 = - dt^2 + a^2(t) h_{ij} dx^i dx^j \, .
\end{equation}
These metrics are defined on $\bbR^+ \times \Sigma$, where $(\Sigma, h)$ is the simply connected Riemannian manifold of constant curvature $0$, $1$ or $-1$, that is, $\bbR^n$, $\bbS^n$ or $\bbH^n$ with the standard metric. The lines of constant spatial coordinates $(x^1,x^2,x^3)$ are timelike geodesics, typically interpreted as representing galaxies moving away from each other; this motion, known as the Hubble flow, is governed by the scale factor function $a(t) > 0$, which we assume to be strictly increasing and unbounded. If the galaxies are modeled as a pressureless fluid of uniform density $\rho=\rho(t)$ then the Einstein equations for \eqref{FLRWmetric} imply
\begin{equation} \label{Friedmann}
\frac{\ddot{a}}{a} = - \frac{4\pi\rho}{3} + \frac{\Lambda}{3} \, , \qquad \frac{d}{dt} \left( \rho a^3 \right) = 0 \, ,
\end{equation}
where $\Lambda$ is the  cosmological constant. Notice that our hypotheses on $a(t)$ require that $\Lambda \geq 0$.

If $\gamma \subset \Sigma$ is the image of a geodesic then $\bbR^+ \times \gamma$ is a totally geodesic submanifold of $\bbR^+ \times \Sigma$, whose induced metric can be put in the form
\begin{equation} \label{FLRWmetric2D}
ds^2 = - dt^2 + a^2(t) dx^2 \, ,
\end{equation}
with Gauss curvature
\begin{equation}
K = \frac{\ddot{a}}{a} = - \frac{4\pi\rho}{3} + \frac{\Lambda}{3}
\end{equation}
(see Appendix~\ref{appendixA}). It follows from equations \eqref{Friedmann} and our hypotheses on $a(t)$ that $K(t)$ is a nondecreasing function with
\begin{equation}
\lim_{t \to \infty} K(t) = \frac{\Lambda}3 \geq 0 \, .
\end{equation}

Now consider a group of galaxies initially arranged along a geodesic segment of $\Sigma$. If, instead of allowing them to evolve as non-interacting test particles, we bind them together into an elastic medium, then we can expect their motion to deviate from the Hubble flow. Nevertheless, tidal forces arising from the nonvanishing curvature may try to pull the galaxies apart, straining the elastic medium. Whether this effect is sufficient to overcome the elastic restoring forces depends on the details of the elastic model, which we discuss in the next section.
%
%
%
\section{Elastic rod}\label{section2}
Since we are interested in the motion of a one-dimensional elastic body (an elastic rod) along a spatial geodesic direction, we may restrict ourselves to a two-dimensional spacetime $(M,g)$ (namely the totally geodesic two-dimensional spacetime submanifold along which the motion takes place). The elastic rod can then be modeled by a scalar field $\lambda : M \to \mathbb{R}$ with nonvanishing spacelike gradient, whose level sets correspond to the worldlines of the particles making up the rod (see \cite{Natario14} for more details). If we choose $\lambda$ to be the proper length along the undeformed rod, then the local compression factor is given by the positive function $n$ defined via
\begin{equation}
n^2 = {|\grad \lambda|}^2 = ({\nabla}^{\alpha}\lambda)({\nabla}_{\alpha}\lambda) \, .
\end{equation}

Because we are considering a two-dimensional spacetime, the energy-momentum tensor of the elastic rod necessarily takes the form of a perfect fluid:
\begin{equation}
T_{\alpha\beta} = (\mu + p) u_\alpha u_\beta + p g_{\alpha\beta} \, ,
\end{equation}
where $\mu$ is the energy density (per unit length), $p$ is the pressure (actually a force in this two-dimensional setting) and $u$ is the unit tangent covector to the rod particle's worldlines.
For simplicity, we assume that the rod is rigid, meaning that its speed of sound equals the speed of light:
\begin{equation}
\frac{dp}{d\mu} = 1 \Rightarrow \mu = \mu_0 + p \, ,
\end{equation}
where $\mu_0$ is the (constant) density in the relaxed state. Under this assumption, as shown in \cite{Natario14}, the energy density and the pressure are given by
\begin{equation}
\mu = \frac{\mu_0}2 \left( n^2 + 1 \right) \, , \qquad p = \frac{\mu_0}2 \left( n^2 - 1 \right) \, ,
\end{equation}
and so the rod's energy-momentum tensor becomes
\begin{equation}\label{eq:energymomentumtensor}
T_{\alpha\beta} = \mu_0 \left[ {\nabla}_{\alpha} \lambda \, {\nabla}_{\beta}\lambda - \frac12 ({\nabla}^{\mu}\lambda\,{\nabla}_{\mu}\lambda) g_{\alpha\beta} - \frac12 g_{\alpha\beta} \right] \, ,
\end{equation}
which is, up to a term with zero divergence, the energy-momentum tensor of a massless scalar field $\lambda$. Consequently, one can readily show that $\lambda$ must satisfy the wave equation:
\begin{equation}
\Box \lambda = 0 \, .
\end{equation}

Generically, the motion of a finite rod leads to a free boundary problem with nonlinear boun\-da\-ry conditions for this equation.
Such problems are highly nontrivial, and a judicious choice of comoving coordinates, for which where the boundary remains fixed, is typically the most effective approach to study them. In this simple two-dimensional setting, a natural comoving coordinate is the function $\lambda$ itself, since the free boundary, corresponding to the endpoints of the rod, is given by level sets of this function. As shown in \cite{Natario14}, one can always choose a conjugate time function $\tau:M \to \bbR$, also satisfying the wave equation, such that the metric in the (harmonic) coordinates $(\tau,\lambda)$ is written as
\begin{equation} \label{proper}
ds^2 = \frac{1}{n^2(\tau,\lambda)} \left( - d\tau^2 + d\lambda^2 \right) \, .
\end{equation}
Conversely, if the metric can be expressed in the form \eqref{proper} then $\lambda$ is a solution of the wave equation. Therefore, a method for solving the problem of the rod's motion is simply to find a suitable harmonic coordinate system.
%
%
\section{Initial boundary value problem}\label{section3}
To solve the free boundary problem we perform a change of variable, setting
\begin{equation}
n(\tau,\lambda) = e^{u(\tau,\lambda)} \, ,
\end{equation}
so that the Gauss curvature of the two-dimensional metric~\eqref{proper} is given by
\begin{equation}
K = e^{2u} \left( - \ddot{u} + u'' \right) \, ,
\end{equation}
where $\dot{} \equiv \frac{\partial}{\partial \tau}$ and ${}' \equiv \frac{\partial}{\partial \lambda}$ (see Appendix~\ref{appendixA}). We assume that the curvature function satisfies the monotonicity condition
\begin{equation}
\dot{K} \geq 0 \, ,
\end{equation}
which holds for the expanding FLRW models discussed in Section~\ref{section1}, as well as for many other expanding cosmological spacetimes.

Assuming that the rod is initially relaxed and free of internal motions, we are led to the following IBVP:
\begin{equation} \label{IBVP}
\begin{cases}
- \ddot{u} + u'' = K e^{- 2u} \\
u(0,\lambda)=\dot{u}(0,\lambda)=0, \quad \lambda \in [0,1] \\
u(\tau,0) = u(\tau,1) = 0, \quad \tau\geq 0
\end{cases} \, ,
\end{equation}
where we normalize the length of the rod to unity.

Consider the energy functional
\begin{equation}
H(\tau) = \int_0^1 \left( \dot{u}^2 + u'^2 - K e^{-2u} \right) d\lambda \, .
\end{equation}
Differentiating with respect to time yields
\begin{align}
\dot{H}(\tau) & = \int_0^1 \left( 2\dot{u}\ddot{u} + 2u'\dot{u}' + 2K e^{-2u}\dot{u} - \dot{K} e^{-2u} \right) d\lambda = \int_0^1 \left( 2\dot{u}u'' + 2u'\dot{u}' - \dot{K} e^{-2u} \right) d\lambda \nonumber \\
& = \Big[ 2\dot{u}u' \Big]_0^1 - \int_0^1 \dot{K} e^{-2u}d\lambda \leq 0 \, . \label{energycalculation}
\end{align}
Hence, $H(\tau)$ is a nonincreasing function of $\tau$. This monotonicity property allows us to establish a global existence and boundedness result, applicable to the expanding FLRW models discussed in Section~\ref{section1} in the case when $\Lambda = 0$.

\begin{Thm}
If $\dot{K} \geq 0$ and $K \leq 0$ then the solution to the IBVP \eqref{IBVP} is global in time towards the future and bounded.
\end{Thm}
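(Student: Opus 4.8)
The plan is to run a standard energy-method continuation argument, the point being that the two hypotheses $\dot K \geq 0$ and $K \leq 0$ together pin the energy $H(\tau)$ between $0$ and $H(0)$, and that this double bound traps the natural energy-space norm of the solution for all future times. Throughout I treat $K = K(\tau,\lambda)$ as a prescribed bounded function with the stated sign properties, in accordance with the IBVP~\eqref{IBVP}.

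First I would invoke local well-posedness. Since the nonlinearity $(\tau,\lambda,u) \mapsto K(\tau,\lambda)\, e^{-2u}$ is smooth and locally Lipschitz in $u$, the semilinear wave equation in \eqref{IBVP} with homogeneous Dirichlet data admits a unique maximal solution on some interval $[0,\tau_{\max})$ in the energy space $H^1_0([0,1]) \times L^2([0,1])$ (or in a classical $C^2$ class under sufficient smoothness). In one spatial dimension the embedding $H^1 \hookrightarrow L^\infty$ makes the nonlinearity harmless, so the local theory is standard, and the continuation criterion reads: if $\tau_{\max} < \infty$, then $\|\dot u(\tau,\cdot)\|_{L^2}^2 + \|u'(\tau,\cdot)\|_{L^2}^2 \to \infty$ as $\tau \to \tau_{\max}$. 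The entire proof reduces to ruling this out.

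The key observation is the sign of the potential term. When $K \leq 0$ we have $-K e^{-2u} = |K|\, e^{-2u} \geq 0$, so every term in
\begin{equation}
H(\tau) = \int_0^1 \dot u^2 \, d\lambda + \int_0^1 u'^2 \, d\lambda + \int_0^1 |K|\, e^{-2u} \, d\lambda
\end{equation}
is nonnegative, whence $H(\tau) \geq 0$. Combining this with the monotonicity $\dot H \leq 0$ already established in \eqref{energycalculation} (which used precisely $\dot K \geq 0$ together with the vanishing of the boundary term $[2\dot u u']_0^1$, itself a consequence of differentiating the boundary conditions in $\tau$), I obtain
\begin{equation}
0 \leq H(\tau) \leq H(0) = \int_0^1 \big( - K(0,\lambda) \big) \, d\lambda < \infty, \qquad \tau \in [0,\tau_{\max}).
\end{equation}
In particular $\|\dot u(\tau,\cdot)\|_{L^2}^2 + \|u'(\tau,\cdot)\|_{L^2}^2 \leq H(\tau) \leq H(0)$ stays bounded, so the continuation criterion forces $\tau_{\max} = \infty$, i.e.\ the solution is global towards the future. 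For boundedness I would use the Dirichlet condition $u(\tau,0)=0$ to write $u(\tau,\lambda) = \int_0^\lambda u'(\tau,s)\,ds$ and estimate, by Cauchy--Schwarz,
\begin{equation}
|u(\tau,\lambda)| \leq \Big( \int_0^1 u'^2 \, ds \Big)^{1/2} \leq H(\tau)^{1/2} \leq H(0)^{1/2},
\end{equation}
uniformly in $\tau \geq 0$ and $\lambda \in [0,1]$. Since $n = e^u$, this says exactly that the compression factor, and hence the rod's length, oscillates within fixed bounds.

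The main obstacle, and the crux of the argument, is the interplay of the two hypotheses rather than either one in isolation: $K \leq 0$ is what makes $H$ a coercive (nonnegative) functional dominating the full $H^1 \times L^2$ norm, while $\dot K \geq 0$ is what makes $H$ nonincreasing. Neither sign alone suffices, since with an indefinite potential the monotone energy would no longer bound the kinetic and gradient terms. Some care must also be taken to match the function space used for local well-posedness to the quantity that $H$ controls, so that the blow-up criterion applies verbatim; this is where the $1$-dimensional Sobolev structure is used.
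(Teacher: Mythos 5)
Your argument is correct and is essentially the paper's own proof: both rest on the monotonicity $\dot H\leq 0$, the observation that $K\leq 0$ makes the potential term in $H$ nonnegative so that $\|u'\|_{L^2}^2\leq H(\tau)\leq H(0)$, and the one-dimensional estimate $\|u\|_{L^\infty}\leq\|u'\|_{L^2}$ from the Dirichlet condition. The only (inessential) difference is that you close the continuation argument with an energy-norm blow-up criterion, whereas the paper's Theorem~\ref{GWPBA} phrases the blow-up alternative in terms of $\|u(\tau,\cdot)\|_{L^\infty}$ alone, which your bound also controls.
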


\begin{proof}
According to the blow-up alternative in Theorem~\ref{GWPBA} (see Appendix~\ref{appendixB}), if $\|u(\tau,\cdot)\|_{L^\infty}$ remains bounded for $\tau>0$ then the unique solution to the IBVP can be extended globally in time towards the future. This norm is controlled by the $L^2$ norm of the spatial derivative, since
\begin{equation}
u(\tau,\lambda) = \int_0^\lambda u'(\tau,s) ds \Rightarrow \|u(\tau,\cdot)\|_{L^\infty} \leq \|u'(\tau,\cdot)\|_{L^1} \leq \|u'(\tau,\cdot)\|_{L^2} \, ,
\end{equation}
where we used the Cauchy-Schwarz inequality. Consequently,
\begin{equation} \label{Linftybound}
\|u\|^2_{L^\infty} \leq \|u'\|^2_{L^2} \leq H(\tau) + \int_0^1 K e^{-2u} d\lambda \leq H(0) + \int_0^1 K e^{-2u} d\lambda \, .
\end{equation}
If $K \leq 0$ we then have
\begin{equation}
\|u\|^2_{L^\infty} \leq H(0) \, .
\end{equation}
\end{proof}

Regarding the expanding FLRW models discussed in Section~\ref{section1}, this result states that if the cosmological constant vanishes then the rod never follows the Hubble flow. The next result implies that this is still true in the case when $\Lambda > 0$ if $K$ is small enough.

\begin{Thm} \label{globalintime}
If $\dot{K} \geq 0$ and $0 \leq K \leq \kappa < 1/e^2$ then the solution to the IBVP \eqref{IBVP} is global in time towards the future and bounded.
\end{Thm}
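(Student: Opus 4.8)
The plan is to run a continuity (bootstrap) argument on top of the energy monotonicity already established, mirroring the previous theorem but now retaining the positive curvature term instead of discarding it. By the blow-up alternative of Theorem~\ref{GWPBA}, it suffices to show that $\|u(\tau,\cdot)\|_{L^\infty}$ stays bounded on the maximal interval of existence. As before I would estimate $\|u\|_{L^\infty} \leq \|u'\|_{L^2}$ by Cauchy-Schwarz and, using $\dot u^2 \geq 0$ together with $H(\tau) \leq H(0)$ and $K \leq \kappa$, write
\begin{equation*}
\|u'(\tau,\cdot)\|_{L^2}^2 \leq H(\tau) + \int_0^1 K e^{-2u}\,d\lambda \leq H(0) + \kappa \int_0^1 e^{-2u}\,d\lambda \, .
\end{equation*}
The crucial new input is a pointwise bound on $e^{-2u}$ by $\|u'\|_{L^2}$ itself: since $u(\tau,0)=0$, the fundamental theorem of calculus and Cauchy-Schwarz give $-2u(\tau,\lambda) \leq 2\|u'(\tau,\cdot)\|_{L^1} \leq 2\|u'(\tau,\cdot)\|_{L^2}$, whence $\int_0^1 e^{-2u}\,d\lambda \leq e^{2\|u'\|_{L^2}}$. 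Moreover $H(0) = -\int_0^1 K(0,\lambda)\,d\lambda \leq 0$, because $K \geq 0$ and the data vanish (so that $u'(0,\cdot)=\dot u(0,\cdot)=0$).

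Writing $E(\tau) = \|u'(\tau,\cdot)\|_{L^2}$, these estimates combine into the closed a priori inequality
\begin{equation*}
E(\tau)^2 \leq \kappa\, e^{2E(\tau)} \, , \qquad \text{equivalently} \qquad \phi\big(E(\tau)\big) \leq \kappa \, , \quad \phi(x) := x^2 e^{-2x} \, .
\end{equation*}
This inequality by itself does \emph{not} bound $E$, since $\phi(x) \to 0$ as $x \to \infty$; this is precisely where the hypothesis $\kappa < 1/e^2$ and the continuity argument must enter. I would analyze $\phi$ on $[0,\infty)$: it vanishes at $0$, increases on $[0,1]$, attains its maximum $\phi(1) = e^{-2} = 1/e^2$, and decreases back to $0$ on $[1,\infty)$. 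Since $\kappa < 1/e^2$, the sublevel set $\{\phi \leq \kappa\}$ is \emph{disconnected}, equal to $[0, E_-] \cup [E_+, \infty)$, where $0 < E_- < 1 < E_+$ are the two roots of $\phi = \kappa$.

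To conclude, I would invoke the continuity of $\tau \mapsto E(\tau)$ on the existence interval, which follows from the regularity of the solution provided by the local theory underlying Theorem~\ref{GWPBA}, together with $E(0) = \|u'(0,\cdot)\|_{L^2} = 0$. Since $E(0)=0$ lies in the component $[0, E_-]$ and $E(\tau)$ can never jump across the forbidden gap $(E_-, E_+)$ without passing through it, continuity forces $E(\tau) \leq E_- < 1$ for all $\tau$ in the interval of existence. Hence $\|u(\tau,\cdot)\|_{L^\infty} \leq E_- < 1$ uniformly, the blow-up alternative yields global existence towards the future, and the uniform bound gives the claimed boundedness (the rod's length oscillating around its relaxed value). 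The main obstacle is exactly the step just described: the naive energy estimate produces only the transcendental, non-coercive inequality $\phi(E) \leq \kappa$, so boundedness is not immediate and must be extracted from the topological disconnection of $\{\phi \leq \kappa\}$ — which is what fails, allowing blow-up, as soon as $\kappa \geq 1/e^2$.
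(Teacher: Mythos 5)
Your proposal is correct and follows essentially the same route as the paper: the same energy monotonicity, the same observation that $H(0)\leq 0$, and the same closed inequality (the paper writes it as $f(\|u\|_{L^\infty})\leq 0$ with $f(x)=x^2-\kappa e^{2x}$, which is equivalent to your $\phi(E)\leq\kappa$ since $f(x)=e^{2x}(\phi(x)-\kappa)$), concluded by the identical continuity/disconnected-sublevel-set argument starting from the zero initial data. The only cosmetic difference is that you close the bootstrap in $\|u'\|_{L^2}$ rather than in $\|u\|_{L^\infty}$; both yield the bound $\|u(\tau,\cdot)\|_{L^\infty}<1$.
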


\begin{proof}
Since $K \geq 0$, we have
\begin{equation}
H(0) = - \int_0^1 K(0,\lambda) d\lambda \leq 0 \, ,
\end{equation}
and so we obtain from \eqref{Linftybound}
\begin{equation} \label{ineq}
\|u\|^2_{L^\infty} \leq \int_0^1 K e^{-2u} d\lambda \leq \kappa e^{2 \|u\|_{L^\infty}} \, .
\end{equation}
Consider the function
\begin{equation}
f(x) = x^2 - \kappa e^{2x} \, ,
\end{equation}
which satisfies\footnote{It can easily be shown that $f$ becomes positive in $\bbR^+$ if and only if $\kappa < 1/e^2$.}
\begin{equation}
f(0)=-\kappa < 0 \qquad \text{ and } \qquad f(1)=1-\kappa e^2 > 0 \, .
\end{equation}
From \eqref{ineq} we have
\begin{equation}
f(\|u\|_{L^\infty}) \leq 0 \, ,
\end{equation}
and since $\|u(0,\cdot)\|_{L^\infty}=0$ we conclude that $\|u(\tau,\cdot)\|_{L^\infty}<1$ for $\tau>0$.
\end{proof}

The condition $K \geq 0$ corresponds, in the context of the expanding FLRW models discussed in Section~\ref{section1}, to the regime in which the cosmological constant dominates the expansion -- that is, when $\Lambda > 4 \pi \rho$. In this case, $\kappa$ effectively measures the dimensionless ratio between the square of the rod’s length and the square of the asymptotic cosmological radius $R=\sqrt{3/\Lambda}$, and so small $\kappa$ corresponds to a rod whose length is much shorter by the cosmological scale.

When the curvature becomes sufficiently large, on the other hand, we obtain the following finite-time blow-up result.

\begin{Thm} \label{blowupThm}
If $\dot{K} \geq 0$ and $K \geq \kappa > \pi^2$ then the solution to the IBVP \eqref{IBVP} blows up before $\tau=1/2$.
\end{Thm}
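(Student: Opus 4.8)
The plan is to exploit finite propagation speed together with a comparison against the spatially homogeneous solution, which will deliver both the threshold $\pi^2$ and the time $1/2$ exactly. The wave operator in $(\tau,\lambda)$ has unit speed, so for any $\tau_0<1/2$ the value of $u$ at the midpoint $(\tau_0,1/2)$ is determined solely by the initial data on the interval $[\,1/2-\tau_0,\,1/2+\tau_0\,]\subset(0,1)$, through the backward characteristic triangle $T$ with apex $(\tau_0,1/2)$. In particular the Dirichlet boundary conditions play no role inside $T$, and the equation can be studied there as a pure Cauchy problem with trivial data.

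Setting $J=-u$, so that blow-up corresponds to $J\to+\infty$, the equation becomes $J_{\tau\tau}-J_{\lambda\lambda}=Ke^{2J}$, and d'Alembert's formula with vanishing initial data gives the integral identity
\[ J(\tau_0,1/2)=\tfrac12\iint_T Ke^{2J}\,d\lambda\,d\tau \ \ge\ \tfrac12\iint_T \kappa\,e^{2J}\,d\lambda\,d\tau , \]
using $K\ge\kappa$; the same representation holds at every sub-apex whose triangle of dependence lies in $T$. I would compare this with the spatially homogeneous solution $V(\tau)$ of $\ddot V=\kappa e^{2V}$, $V(0)=\dot V(0)=0$. A first integral gives $\dot V^2=\kappa(e^{2V}-1)$, whence $V$ blows up at $\tau_\ast=\frac{1}{\sqrt\kappa}\int_0^\infty (e^{2V}-1)^{-1/2}\,dV=\frac{\pi}{2\sqrt\kappa}$; since $\kappa>\pi^2$ we have $\tau_\ast<1/2$. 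Because $V$ is independent of $\lambda$ it also solves $V_{\tau\tau}-V_{\lambda\lambda}=\kappa e^{2V}$, so it satisfies the analogous identity $V(\sigma)=\tfrac12\iint \kappa e^{2V}$ on every backward triangle.

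The core step is the comparison $J\ge V$ throughout $T$. Subtracting the two integral identities at a common sub-apex $(\sigma,\mu)$ gives $J(\sigma,\mu)-V(\sigma)\ge \tfrac12\iint \kappa\,(e^{2J}-e^{2V})$, and since $x\mapsto e^{2x}$ is increasing, the right-hand side is nonnegative provided $J\ge V$ holds on the strictly lower triangle of dependence. Running a first-crossing/continuity argument in $\sigma$ — the inequality holds at $\sigma=0$, where $J=V=0$, and the integral identity forbids it from being violated for the first time — propagates $J\ge V$ up to the apex, giving $J(\tau_0,1/2)\ge V(\tau_0)$ for every $\tau_0<\tau_\ast$. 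Letting $\tau_0\uparrow\tau_\ast$ forces $\|u(\tau_0,\cdot)\|_{L^\infty}\ge J(\tau_0,1/2)\ge V(\tau_0)\to+\infty$, so by the blow-up alternative of Theorem~\ref{GWPBA} the maximal time of existence is at most $\tau_\ast<1/2$.

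I expect the main obstacle to be making the comparison $J\ge V$ fully rigorous: the semilinear wave equation admits no naive pointwise maximum principle, so the argument must run through the d'Alembert representation and a careful analysis of the first time of violation, tracking the nested triangles of dependence and the fact that $V$ is compared with the spatial minimum of $J$ on each time slice. One must also check that the local solution is regular enough for d'Alembert's formula and that, should $u$ fail to exist up to $\tau_\ast$, it has already blown up earlier — either way yielding blow-up before $\tau=1/2$ (the hypothesis $\dot K\ge0$ entering only through the local well-posedness and blow-up alternative). A softer energy/test-function argument — testing the equation against $\sin(\pi\lambda)$ and using Jensen's inequality to obtain $\ddot I\le -\pi^2 I-\tfrac{2\kappa}{\pi}e^{-\pi I}$ for $I=\int_0^1 u\sin(\pi\lambda)\,d\lambda$ — also yields finite-time blow-up, but with a larger time constant, which is why I would favor the finite-speed comparison to capture the sharp bound $1/2$.
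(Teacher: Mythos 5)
Your proposal is correct and, at its core, is the same argument as the paper's: restrict attention to the backward light cone of interior points (so the Dirichlet conditions never enter), and dominate the solution by the spatially homogeneous ODE solution of $\ddot v = -\kappa e^{-2v}$, $v(0)=\dot v(0)=0$, namely $v(\tau)=\log\cos(\sqrt{\kappa}\,\tau)$ (your $V=-v$), which blows up at $\tau_*=\pi/(2\sqrt{\kappa})<1/2$. The only substantive difference is how the comparison is established. The paper works in null coordinates and runs a monotone iteration $u\le v_n$, $L^+u\le w_n$ on the Cauchy development $\mathcal{D}$, each step being a direct consequence of $L^-L^+u\le-\kappa e^{-2u}$; the decreasing limits satisfy the ODE's integral equations by monotone convergence, so the comparison comes for free. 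You instead use the d'Alembert integral identity and a first-crossing argument. As you yourself flag, that is the one point needing care: the first-crossing reasoning as stated only shows that the set of times up to which $J\ge V$ holds is closed, not that it is open, so it does not by itself propagate the inequality past the putative first violation time. The standard repair is quantitative rather than qualitative: on any compact subtriangle where $J$ and $V$ are bounded by some $M$, use $e^{2V}-e^{2J}\le 2e^{2M}(V-J)^+$ in the subtracted identity to get $(V-J)^+(\sigma,\mu)\le \kappa e^{2M}\iint (V-J)^+$, and conclude $(V-J)^+\equiv 0$ by Gr\"onwall or Picard iteration over nested triangles. With that fix (or by simply adopting the paper's monotone scheme, which sidesteps the issue), your proof goes through; everything else, including the computation of $\tau_*$, the handling of the blow-up alternative, and the observation that $\dot K\ge 0$ plays no role beyond well-posedness, matches the paper.
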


\begin{proof}
The proof consists in showing that under these hypotheses there is an ODE mechanism driving the blow-up. Consider the future Cauchy development of the initial data for the IBVP \eqref{IBVP},
\begin{equation}
\mathcal{D} = \left\{ (\tau,\lambda) \in \bbR^2 : 0 \leq \tau \leq \frac12 - \left|\lambda-\frac12\right| \right\}
\end{equation}
(see Figure~\ref{Domain}), and the vector fields

\begin{figure}[h!]
\begin{center}
\psfrag{t}{$\tau$}
\psfrag{l}{$\lambda$}
\psfrag{0}{$0$}
\psfrag{1}{$1$}
\psfrag{1/2}{$\frac12$}
\psfrag{D+}{$\mathcal{D}$}
\epsfxsize=.5\textwidth
\leavevmode
\epsfbox{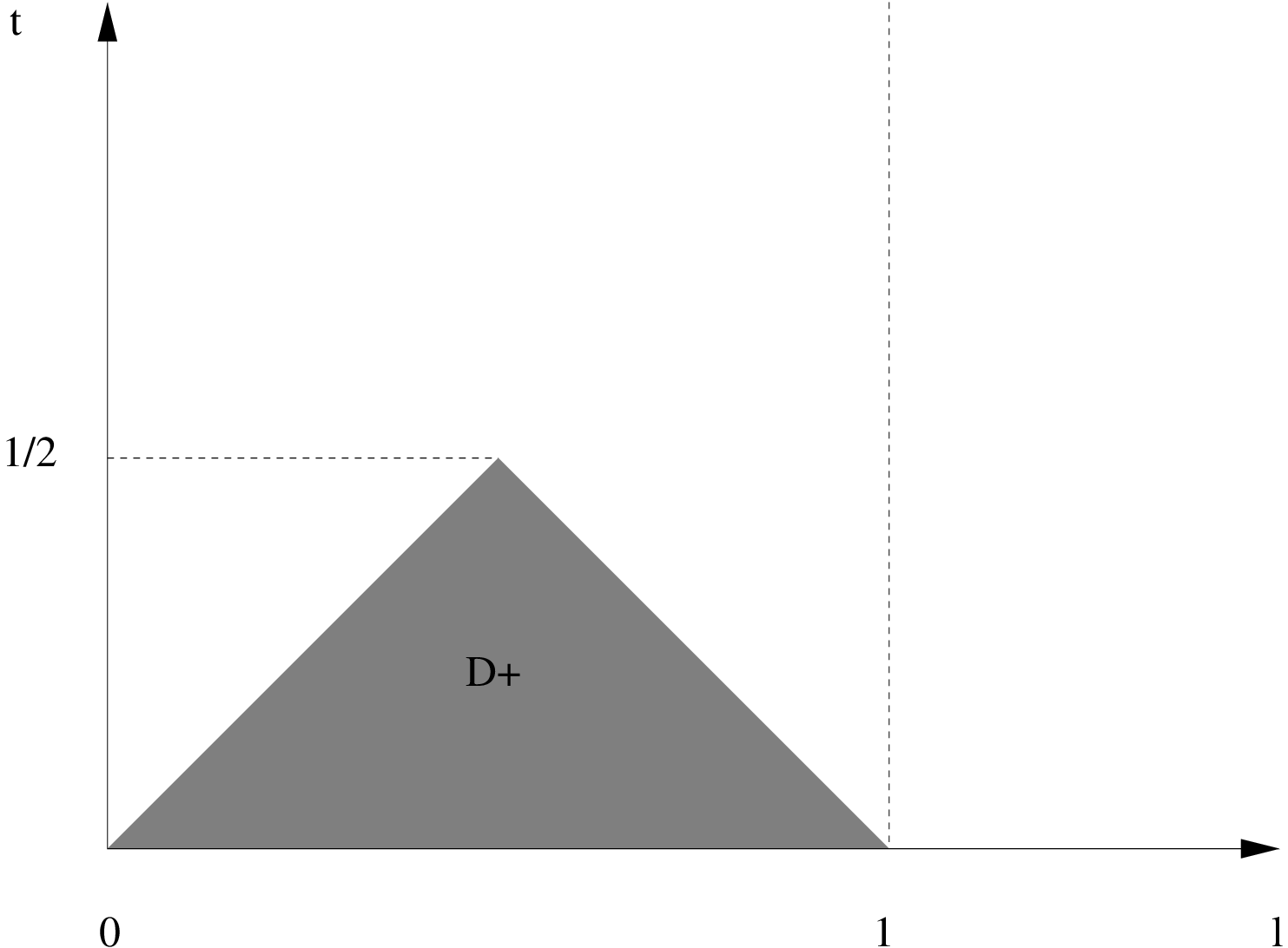}
\end{center}
\caption{Future Cauchy development of the initial data for the IBVP \eqref{IBVP}.} \label{Domain}
\end{figure}

\begin{equation} \label{L+-}
L^- = \frac{\partial}{\partial \tau} - \frac{\partial}{\partial \lambda} \qquad \text{ and } \qquad L^+ = \frac{\partial}{\partial \tau} + \frac{\partial}{\partial \lambda} \, ,
\end{equation}
whose integral curves are of the form
\begin{equation}
\tau \mapsto (\tau, \lambda_0-\tau) \qquad \text{ and } \qquad \tau \mapsto (\tau, \lambda_0+\tau)
\end{equation}
(with $\lambda_0$ constant). The nonlinear wave equation in \eqref{IBVP} can be written as
\begin{equation} 
L^- L^+ u = - K e^{-2u} \, ,
\end{equation}
whence
\begin{equation} \label{wavenullineq}
L^- L^+ u \leq - \kappa e^{-2u} \, .
\end{equation}
From the initial conditions in \eqref{IBVP} it is clear that
\begin{equation}
L^+ u(0,\lambda) = 0
\end{equation}
for $\lambda \in [0,1]$, and so \eqref{wavenullineq} implies
\begin{equation}
L^+ u \leq 0 \qquad \text{on } \mathcal{D} \, ,
\end{equation}
whence
\begin{equation}
u \leq 0 \qquad \text{on } \mathcal{D} \, .
\end{equation}
This inequality can be used on \eqref{wavenullineq} to obtain the improved estimate
\begin{equation}
L^+ u \leq - \kappa \tau \qquad \text{on } \mathcal{D} \, ,
\end{equation}
whence
\begin{equation}
u \leq - \frac12 \kappa \tau^2 \qquad \text{on } \mathcal{D} \, .
\end{equation}
By iterating the procedure above we can construct a sequence of estimates
\begin{equation} \label{estimates}
u \leq v_n(\tau) \, , \qquad L^+ u \leq w_n(\tau) \qquad \text{on } \mathcal{D} \, ,
\end{equation}
where
\begin{equation}
v_0(\tau) = w_0(\tau) = 0 \, , \qquad v_{n+1}(\tau) = \int_0^\tau w_n(s) ds \, , \qquad w_{n+1}(\tau) = - \kappa \int_0^\tau e^{-2v_n(s)} ds \, .
\end{equation}
It is easy to show (by induction) that the sequences $v_n(\tau)$ and $w_n(\tau)$ are decreasing for each $\tau \in [0,\frac12)$, and so they have limits $v(\tau)$ and $w(\tau)$ (possibly $-\infty$). By the Monotone Convergence Theorem, the limit functions $v(\tau)$ and $w(\tau)$ satisfy
\begin{equation} \label{limitfunctions}
v(\tau) = \int_0^\tau w(s) ds \, , \qquad w(\tau) = - \kappa \int_0^\tau e^{-2v(s)} ds \, .
\end{equation}
Note that $v_n(\tau)$ and $w_n(\tau)$ are clearly decreasing functions of $\tau$, and consequently so are the limit functions $v(\tau)$ and $w(\tau)$. In particular, if $v(\tau_0)=-\infty$ then $v(\tau)=-\infty$ for all $\tau> \tau_0$ (and similarly for $w(\tau)$). From \eqref{limitfunctions} we then see that $v(\tau)$ and $w(\tau)$ are smooth real-valued functions on some interval of the form $[0,\tau_*)$, and identically $-\infty$ on $(\tau_*,\frac12)$. Moreover, on  $[0,\tau_*)$ the function $v(\tau)$ satisfies the ODE Cauchy problem
\begin{equation}
\begin{cases}
\ddot{v} = - \kappa e^{- 2v} \\
v(0)=\dot{v}(0)=0
\end{cases} \, ,
\end{equation}
whose solution is
\begin{equation}
v(\tau) = \log \cos \left( \sqrt{\kappa} \tau \right) \, .
\end{equation}
This shows that
\begin{equation}
\tau_* = \frac{\pi}{2\sqrt{\kappa}} < \frac12 \, .
\end{equation}
Since from \eqref{estimates} we have
\begin{equation}
u \leq v(\tau) \qquad \text{on } \mathcal{D} \, ,
\end{equation}
we conclude that $u$ blows up before $\tau = 1/2$.
\end{proof}

Concerning the expanding FLRW models discussed in Section~\ref{section1}, this result states that if the initial length of the rod is comparable to the cosmological scale then the rod is infinitely stretched by the Hubble flow.
%
%
\section{The constant curvature case}\label{section4}
In the constant curvature case, where the FLRW model is the de Sitter universe with cosmological radius $R = 1/\sqrt{K}$, there is a simple geometric interpretation of the blowup in Theorem~\ref{blowupThm}: the condition $K > \pi^2$ is simply the condition that $R < 1/\pi$, meaning that the diameter of the region bounded by the cosmological horizon (measured along a spacelike geodesic) is $\pi R < 1$. Since we have set the rod's length to $1$, this means that the rod initially extends beyond the cosmological horizon, and so the future Cauchy development of the initial data for the rod must contain a segment of future null infinity $\mathscr{I^+}$ (see Figure~\ref{Pen_dS}). This means that the length of the rod cannot remain bounded, and in fact blows up in finite coordinate time $\tau$.

\begin{figure}[h!]
\begin{center}
\psfrag{I+}{$\mathscr{I^+}$}
\psfrag{I-}{$\mathscr{I^-}$}
\epsfxsize=.5\textwidth
\leavevmode
\epsfbox{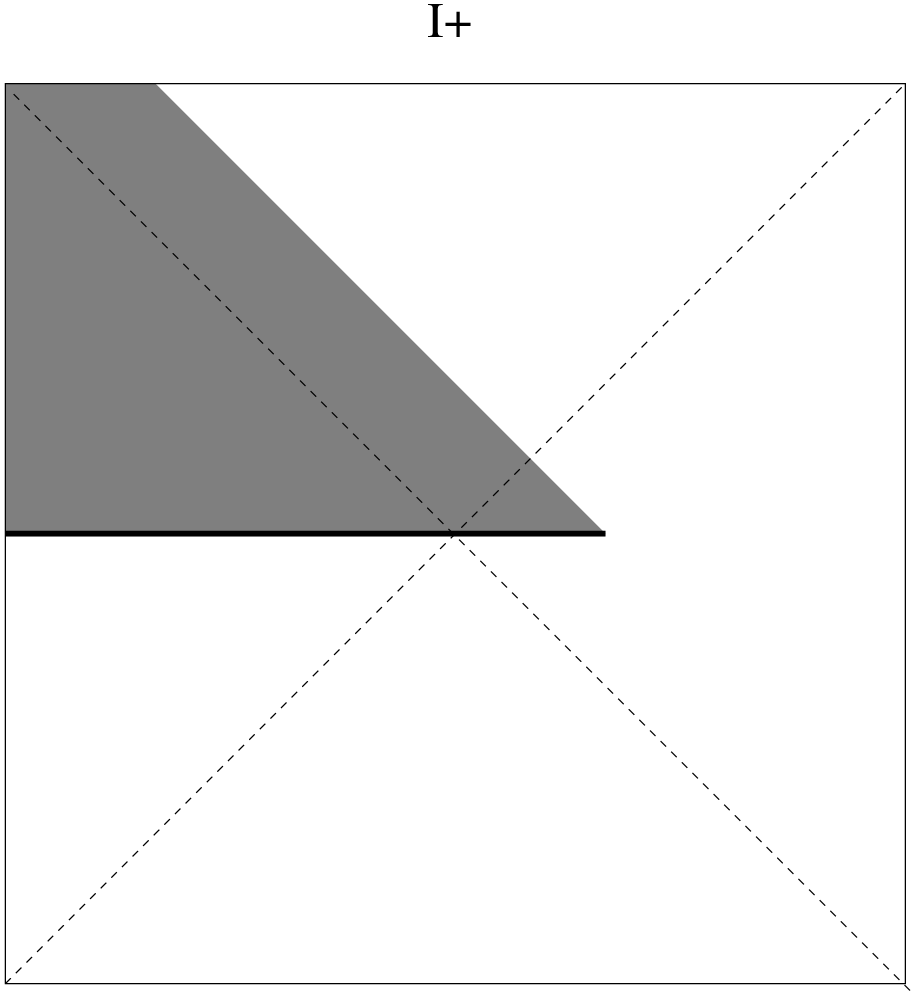}
\end{center}
\caption{Penrose diagram for the de Sitter spacetime, depicting the initial configuration of the rod centered at $r=0$ (thick line segment, whose left endpoint represents the rod's midpoint) and its future Cauchy development (shaded region). In the case $K > \pi^2$, the rod initially extends beyond the cosmological horizon, and so the future Cauchy development of its initial data must contain a segment of $\mathscr{I^+}$.} \label{Pen_dS}
\end{figure}

The next result demonstrates that the rod’s length still blows up in scenarios where the geometric interpretation above no longer applies.

\begin{Thm} \label{constantcurvature}
If $K>0$ is constant then the solution to the IBVP \eqref{IBVP} blows up for some $K < \pi^2$.
\end{Thm}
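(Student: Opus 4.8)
The plan is to exploit the Dirichlet boundary conditions, which the proof of Theorem~\ref{blowupThm} did not use, by projecting the solution onto the first eigenfunction of the Dirichlet Laplacian on $[0,1]$. Set
\begin{equation*}
a(\tau) = \int_0^1 u(\tau,\lambda)\sin(\pi\lambda)\,d\lambda ,
\end{equation*}
so that $a(0)=\dot a(0)=0$. Multiplying the wave equation in \eqref{IBVP} by $\sin(\pi\lambda)$, integrating over $[0,1]$ and integrating by parts twice (the boundary terms vanish because both $\sin(\pi\lambda)$ and $u$ vanish at $\lambda=0,1$) yields
\begin{equation*}
\ddot a = -\pi^2 a - K\int_0^1 e^{-2u}\sin(\pi\lambda)\,d\lambda .
\end{equation*}
Since $\int_0^1\sin(\pi\lambda)\,d\lambda = \tfrac{2}{\pi}$, Jensen's inequality applied to the convex function $e^{-2x}$ and the probability measure $\tfrac{\pi}{2}\sin(\pi\lambda)\,d\lambda$ gives $\int_0^1 e^{-2u}\sin(\pi\lambda)\,d\lambda \geq \tfrac{2}{\pi}e^{-\pi a}$, whence the autonomous differential inequality
\begin{equation*}
\ddot a \leq -\pi^2 a - \frac{2K}{\pi}\,e^{-\pi a} =: g(a), \qquad a(0)=\dot a(0)=0 .
\end{equation*}

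Next I would analyze this inequality by energy methods, writing $g=-V'$ with potential $V(a)=\tfrac{\pi^2}{2}a^2-\tfrac{2K}{\pi^2}e^{-\pi a}$ and $E(\tau)=\tfrac12\dot a^2+V(a)$. Since $\ddot a-g(a)\le0$, one has $\dot E = \dot a\,(\ddot a-g(a))$, which is nonnegative precisely where $\dot a\le0$. As $\ddot a(0)=g(0)=-2K/\pi<0$, the quantity $a$ starts strictly decreasing, and a first-return argument shows $\dot a<0$ throughout the existence interval: if $\tau_1$ were the first time with $\dot a(\tau_1)=0$, then $E$ would be nondecreasing on $(0,\tau_1)$, forcing $V(a(\tau_1))=E(\tau_1)\ge E(0)=V(0)$, which contradicts the strict inequality $V(a)<V(0)$ for $a<0$ established below.

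Granting $\dot a<0$ and $E(\tau)\ge E(0)=V(0)$, one obtains $\dot a \le -\sqrt{2\,(V(0)-V(a))}$, and separating variables bounds the existence time by
\begin{equation*}
\tau \;\le\; \int_{-\infty}^{0}\frac{da}{\sqrt{2\,(V(0)-V(a))}} \;=:\; \tau_{\max} .
\end{equation*}
The integrand is $O(e^{\pi a/2})$ as $a\to-\infty$ and $O(|a|^{-1/2})$ as $a\to0^-$, so $\tau_{\max}<\infty$; hence $a(\tau)\to-\infty$ in finite time. Since $|a(\tau)|\le\tfrac{2}{\pi}\|u(\tau,\cdot)\|_{L^\infty}$, this forces $\|u(\tau,\cdot)\|_{L^\infty}\to\infty$, and the blow-up alternative in Theorem~\ref{GWPBA} then gives blow-up of $u$.

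The crux is the single inequality $V(a)<V(0)$ for every $a<0$, equivalently $K>\tfrac{\pi^4 b^2}{4(e^{\pi b}-1)}$ for all $b>0$; it is what simultaneously guarantees monotonicity of $a$ and positivity of the radicand $V(0)-V(a)$. Using the elementary bound $e^{\pi b}>1+\tfrac{\pi^2 b^2}{2}$ (valid for $b>0$), the right-hand side stays below $\pi^2/2$, so the inequality holds whenever $K\ge\pi^2/2$; since $\pi^2/2<\pi^2$ this already exhibits blow-up for a value of $K$ strictly below $\pi^2$, proving the theorem. (The supremum of $\tfrac{\pi^4 b^2}{4(e^{\pi b}-1)}$ is in fact well below $\pi^2/2$, so the true threshold is considerably smaller.) I expect the main obstacle to be the monotonicity step: one must rule out that the linear restoring term $-\pi^2 a$ turns the trajectory around before the exponential term drives it to $-\infty$, and it is precisely the lower bound on $K$ that prevents this.
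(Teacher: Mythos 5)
Your proof is correct, and it takes a genuinely different route from the paper. The paper's argument stays entirely on the characteristics: it computes the exact solution on the Cauchy development $\mathcal{D}$, propagates one-sided bounds on $L^+u$ into the region $\mathcal{B}$, seeds a second ODE comparison on a small segment $\mathcal{L}$ near the corner $(\tfrac12,\tfrac12)$, and then performs a delicate Taylor-expansion analysis of a transcendental system near $(K,\delta,\Delta)=(\pi^2,0,0)$; as a result it only exhibits blow-up for $K$ slightly below $\pi^2$, but its core mechanism never touches the boundary conditions (which is why the paper can recycle it for the tethered-galaxy IBVP with only a short extra sign argument at $\lambda=0,1$). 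Your Kaplan-style argument instead projects onto the first Dirichlet eigenfunction, uses Jensen's inequality to close the resulting scalar differential inequality $\ddot a\le -\pi^2 a-\tfrac{2K}{\pi}e^{-\pi a}$, and runs a first-return/energy comparison; the steps all check out (the integrations by parts, the convexity direction in Jensen, the monotonicity of $E$ where $\dot a\le0$, the convergence of $\int_{-\infty}^0 (2(V(0)-V(a)))^{-1/2}\,da$ at both endpoints, and the reduction of the crux inequality to $K>\sup_{b>0}\tfrac{\pi^4b^2}{4(e^{\pi b}-1)}$). What your approach buys is a much stronger and fully explicit conclusion: blow-up for every constant $K\ge\pi^2/2$, and with the sharp version of your crux inequality for every $K>\sup_{b>0}\tfrac{\pi^4 b^2}{4(e^{\pi b}-1)}\approx 1.60$, which strikingly matches the numerical threshold $K\gtrsim 1.5$ reported in the paper's Remark. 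What it gives up is portability: the vanishing of the boundary terms is essential, so the argument does not transfer directly to the nonlinear boundary conditions of the tethered-galaxy problem. Two cosmetic points: the phrase ``$a(\tau)\to-\infty$ in finite time'' is best replaced by the cleaner statement that the separation-of-variables bound forces the maximal existence time $T$ to satisfy $T\le\tau_{\max}<\infty$, after which Theorem~\ref{GWPBA} gives $\limsup\|u(\tau,\cdot)\|_{L^\infty}=+\infty$; and you should note explicitly that $a(\tau_1)<0$ in the first-return argument (immediate since $\dot a<0$ on $(0,\tau_1)$), as this is what lets you invoke $V(a)<V(0)$.
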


\begin{proof}
Assume that $0<K<\pi^2$ is constant. It is easy to check that
\begin{equation}
u(\tau,\lambda) = \log \cos \left( \sqrt{K} \tau \right) \qquad \text{on } \mathcal{D} \, ,
\end{equation}
whence
\begin{equation}
L^+u(\tau,\lambda) = - \sqrt{K} \tan \left( \sqrt{K} \tau \right) \qquad \text{on } \mathcal{D} \, .
\end{equation}
From
\begin{equation} \label{main2}
L^- L^+ u = - K e^{-2u} \leq 0
\end{equation}
and the fact that $\tau + \lambda$ is constant along the integral lines of $L^-$ we obtain
\begin{equation} 
L^+ u \leq - \sqrt{K} \tan \left( \sqrt{K} \, \frac{\tau + \lambda}2 \right) \qquad \text{on } \mathcal{B} \, ,
\end{equation}
where the region $\mathcal{B}$ is defined as
\begin{equation}
\mathcal{B} = \left\{ (\tau,\lambda) \in \bbR^2 : 0 \leq \lambda \leq \frac12 - \left|\tau-\frac12\right| \right\}
\end{equation}
(see Figure~\ref{Domain2}). 

\begin{figure}[h!]
\begin{center}
\psfrag{t}{$\tau$}
\psfrag{l}{$\lambda$}
\psfrag{0}{$0$}
\psfrag{1}{$1$}
\psfrag{1/2}{$\frac12$}
\psfrag{B}{$\mathcal{B}$}
\epsfxsize=.5\textwidth
\leavevmode
\epsfbox{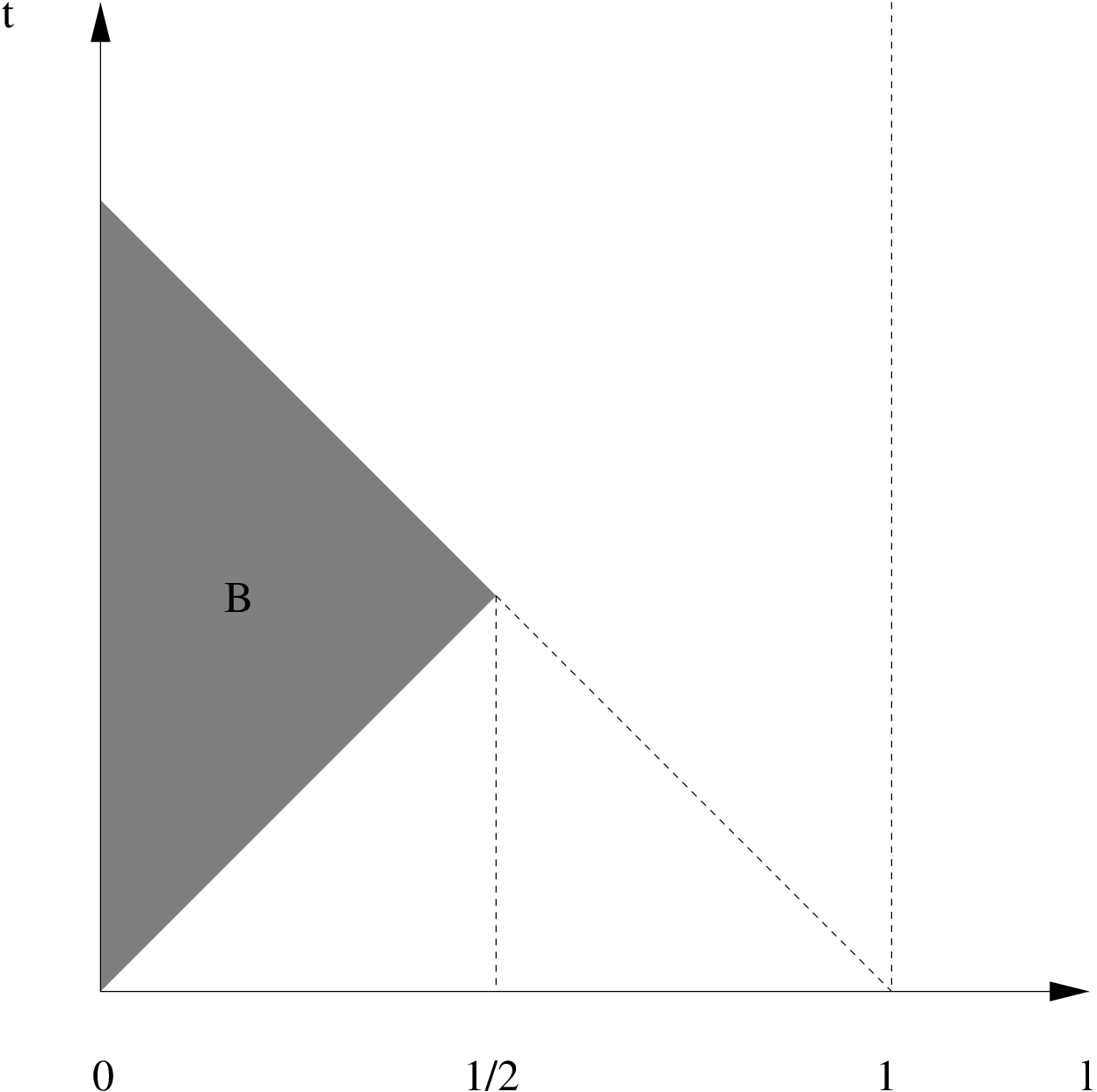}
\end{center}
\caption{Region $\mathcal{B}$.} \label{Domain2}
\end{figure}

This can be rewritten as
\begin{equation} 
\frac{d}{d\tau} u(\tau, \lambda_0 + \tau) \leq - \sqrt{K} \tan \left( \sqrt{K} \, \frac{\lambda_0 + 2\tau}2 \right) \, ,
\end{equation}
and so, integrating from $\lambda=0$, we obtain
\begin{equation} 
u(\tau, \lambda_0 + \tau) \leq - \int_{-\lambda_0}^\tau \sqrt{K} \tan \left( \sqrt{K} \, \frac{\lambda_0 + 2s}2 \right) ds = \log \frac{\cos \left( \sqrt{K} \, \frac{\lambda_0 + 2\tau}2 \right)}{\cos \left( \sqrt{K} \, \frac{\lambda_0}2 \right)} \, ,
\end{equation}
that is,
\begin{equation} 
u \leq \log \frac{\cos \left( \sqrt{K} \, \frac{\tau + \lambda}2 \right)}{\cos \left( \sqrt{K} \, \frac{\tau - \lambda}2 \right)} \qquad \text{on } \mathcal{B} \, .
\end{equation}
From \eqref{main2} we then have
\begin{equation} 
L^- L^+ u \leq - K \frac{\cos^2 \left( \sqrt{K} \, \frac{\tau - \lambda}2 \right)}{\cos^2 \left( \sqrt{K} \, \frac{\tau + \lambda}2 \right)} \qquad \text{on } \mathcal{B} \, ,
\end{equation}
which can be rewritten as
\begin{equation} 
\frac{d}{d\tau} L^+u(\tau, \lambda_0 - \tau) \leq - K \frac{\cos^2 \left( \sqrt{K} \, \frac{2\tau - \lambda_0}2 \right)}{\cos^2 \left( \sqrt{K} \, \frac{\lambda_0}2 \right)} \, .
\end{equation}
Integrating from the boundary of $\mathcal{D}$, where $\tau = \lambda$, yields
\begin{equation} 
L^+u(\tau, \lambda_0 - \tau) + \sqrt{K} \tan \left( \sqrt{K} \, \frac{\lambda_0}2 \right) \leq - \int_{\frac{\lambda_0}2}^\tau K \frac{\cos^2 \left( \sqrt{K} \, \frac{2s - \lambda_0}2 \right)}{\cos^2 \left( \sqrt{K} \, \frac{\lambda_0}2 \right)} ds \, .
\end{equation}
Introducing two constants $\delta, \Delta > 0$ such that $\delta + \Delta < \frac12$ (whose meaning will become apparent), and assuming that
\begin{equation} 
\tau - \lambda \leq 2 \delta + \Delta \Leftrightarrow 2\tau - \lambda_0 \leq 2 \delta + \Delta \, ,
\end{equation}
we then have
\begin{equation} 
L^+u(\tau, \lambda_0 - \tau) \leq - \sqrt{K} \tan \left( \sqrt{K} \, \frac{\lambda_0}2 \right) - K \frac{(2\tau - \lambda_0)\cos^2 \left( \sqrt{K} \, \frac{2 \delta + \Delta}2 \right)}{2\cos^2 \left( \sqrt{K} \, \frac{\lambda_0}2 \right)} \, ,
\end{equation}
or, equivalently,
\begin{equation} 
L^+ u \leq - \sqrt{K} \tan \left( \sqrt{K} \, \frac{\tau + \lambda}2 \right) - K \frac{(\tau - \lambda)\cos^2 \left( \sqrt{K} \, \frac{2 \delta + \Delta}2 \right)}{2\cos^2 \left( \sqrt{K} \, \frac{\tau + \lambda}2 \right)} \quad \text{on } \mathcal{B} \cap \left\{ \tau - \lambda \leq 2 \delta + \Delta \right\} \, .
\end{equation}
Consider in particular the line segment
\begin{equation}
\mathcal{L} = \left\{ (\tau,\lambda) \in \bbR^2 : \tau = \frac12 + \delta \text{ and } \frac12 - \delta - \Delta \leq \lambda \leq \frac12 - \delta \right\}
\end{equation}
(see Figure~\ref{Domain3}). We have

\begin{figure}[h!]
\begin{center}
\psfrag{t}{$\tau$}
\psfrag{l}{$\lambda$}
\psfrag{1}{$1$}
\psfrag{1/2-d-D}{$\frac12\!-\!\delta\!-\!\Delta$}
\psfrag{1/2-d}{$\frac12\!-\!\delta$}
\psfrag{1/2+d}{$\frac12+\delta$}
\psfrag{L}{$\mathcal{L}$}
\psfrag{D'}{$\mathcal{D}'$}
\epsfxsize=.5\textwidth
\leavevmode
\epsfbox{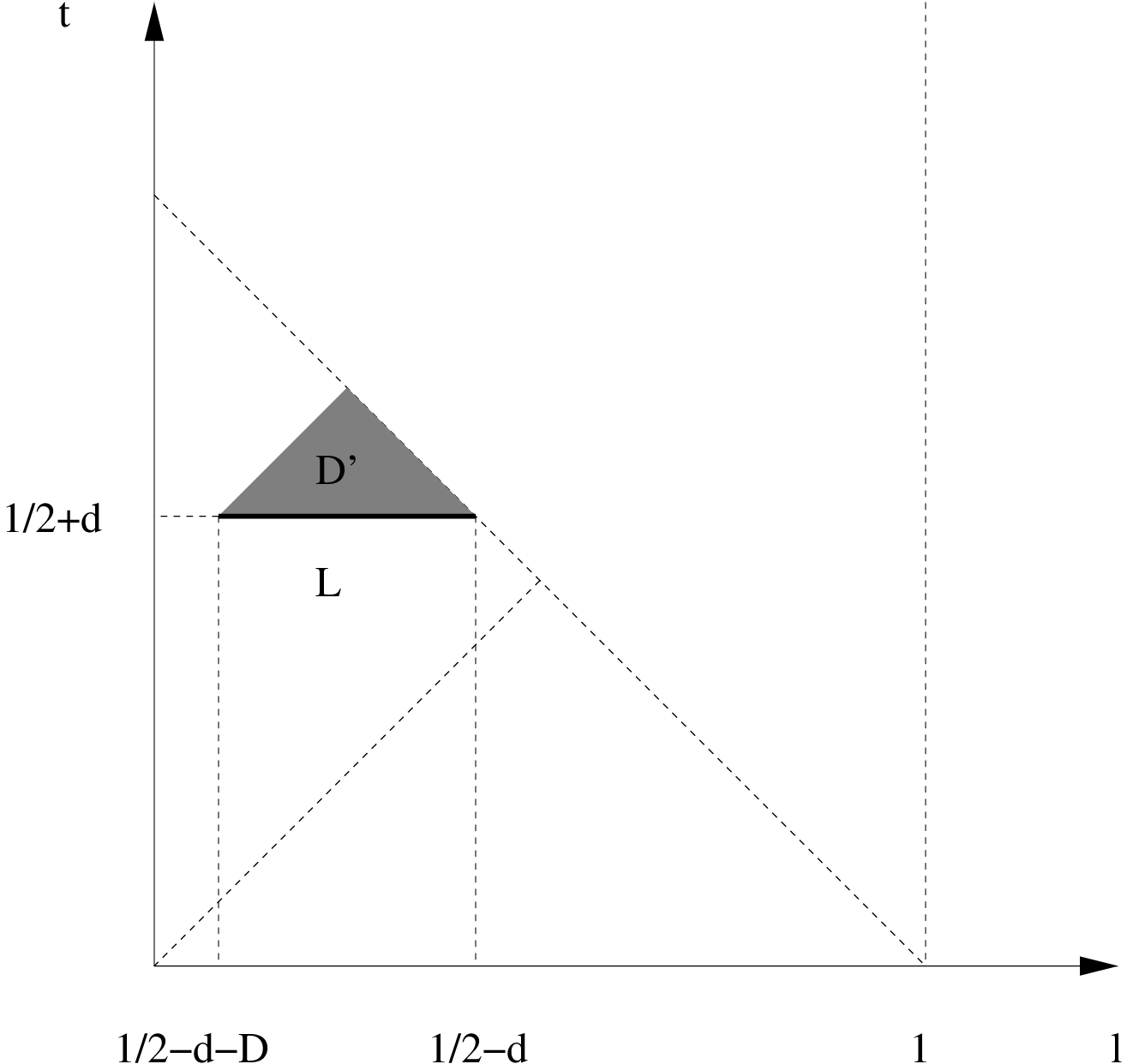}
\end{center}
\caption{Line segment $\mathcal{L}$ and its future Cauchy development.} \label{Domain3}
\end{figure}

\begin{equation} 
u \leq \log \frac{\cos \left( \sqrt{K} \, \frac{1 - \Delta}2 \right)}{\cos \left( \sqrt{K} \, \frac{2\delta + \Delta}2 \right)} \qquad \text{on } \mathcal{L}
\end{equation}
and
\begin{equation} 
L^+ u \leq - \sqrt{K} \tan \left( \sqrt{K} \, \frac{1 - \Delta}2 \right) - \frac{K \delta\cos^2 \left( \sqrt{K} \, \frac{2\delta + \Delta}2 \right)}{\cos^2 \left( \sqrt{K} \, \frac{1 - \Delta}2 \right)} \qquad \text{on } \mathcal{L} \, .
\end{equation}
Let $\mathcal{D}'$ be the future Cauchy development of $\mathcal{L}$ (see Figure~\ref{Domain3}). Arguing as in the proof of Theorem~\ref{blowupThm}, we conclude that
\begin{equation} \label{arguing}
u \leq v(\tau) \, , \qquad L^+ u \leq \dot{v}(\tau) \qquad \text{on } \mathcal{D}' \, ,
\end{equation}
where $v(\tau)$ is the solution of the ODE Cauchy problem
\begin{equation}
\begin{cases}
\ddot{v} = - K e^{- 2v} \\
v(\frac12 + \delta)= \log \frac{\cos \left( \sqrt{K} \, \frac{1 - \Delta}2 \right)}{\cos \left( \sqrt{K} \, \frac{2\delta + \Delta}2 \right)} \\
\dot{v}(\frac12 + \delta)= - \sqrt{K} \tan \left( \sqrt{K} \, \frac{1 - \Delta}2 \right) - \frac{K \delta\cos^2 \left( \sqrt{K} \, \frac{2\delta + \Delta}2 \right)}{\cos^2 \left( \sqrt{K} \, \frac{1 - \Delta}2 \right)}
\end{cases} \, .
\end{equation}
Since the expression for the general solution of the ODE is\footnote{This is the correct expression for the range of initial data $(v_0,\dot{v}_0)$ that that we will be considering, characterized by $\dot{v}_0e^{v_0}<-\sqrt{K}$; for initial data satisfying $\dot{v}_0e^{v_0}>\sqrt{K}$ or $-\sqrt{K}<\dot{v}_0e^{v_0}<\sqrt{K}$ one has the alternative expressions $v(\tau) = \log \left(\frac1{\alpha}\sinh \left( \alpha\sqrt{K} (\tau - \beta) \right)\right)$ or $v(\tau) = \log \left(\frac1{\alpha}\cos \left( \alpha\sqrt{K} (\tau - \beta) \right)\right)$.}
\begin{equation} \label{solutionODE}
v(\tau) = \log \left(\frac1{\alpha}\sinh \left( \alpha\sqrt{K} (\beta - \tau) \right)\right) \, ,
\end{equation}
we must have
\begin{equation} \label{nonlinearsys}
\begin{cases}
\frac1{\alpha}\sinh \left( \alpha\sqrt{K} \frac{2\beta - 1 - 2 \delta}2 \right) = \frac{\cos \left( \sqrt{K} \, \frac{1 - \Delta}2 \right)}{\cos \left( \sqrt{K} \, \frac{2\delta + \Delta}2 \right)} \\
- \alpha \sqrt{K} \coth \left( \alpha\sqrt{K} \frac{1 + 2 \delta - 2\beta}2 \right) = - \sqrt{K} \tan \left( \sqrt{K} \, \frac{1 - \Delta}2 \right) - \frac{K \delta\cos^2 \left( \sqrt{K} \, \frac{2\delta + \Delta}2 \right)}{\cos^2 \left( \sqrt{K} \, \frac{1 - \Delta}2 \right)}
\end{cases} \, .
\end{equation}
Now the solution~\eqref{solutionODE} blows up at time $\tau = \frac12 + \delta + \Delta\tau$, where
\begin{equation}
\Delta\tau = \frac{2\beta - 1 - 2 \delta}2 \, .
\end{equation}
If $\Delta \tau < \frac12 \Delta$ then \eqref{arguing} implies that $u$ blows up in $\mathcal{D}'$. This can be shown to be indeed the case for appropriate values of $(K, \delta, \Delta)$ close to $(\pi^2,0,0)$. In fact, let us rewrite \eqref{nonlinearsys} as
\begin{equation} \label{nonlinearsys2}
\begin{cases} 
\alpha = \frac{\sinh\left( \alpha\sqrt{K} \Delta \tau \right) \cos \left( \sqrt{K} \, \frac{2\delta + \Delta}2 \right)}{\cos \left( \sqrt{K} \, \frac{1 - \Delta}2 \right)} \\
\cosh\left( \alpha\sqrt{K} \Delta \tau \right) = \frac{\sin \left( \sqrt{K} \, \frac{1 - \Delta}2 \right)}{\cos \left( \sqrt{K} \, \frac{2\delta + \Delta}2 \right)} + \frac{\sqrt{K} \delta\cos \left( \sqrt{K} \, \frac{2\delta + \Delta}2 \right)}{\cos \left( \sqrt{K} \, \frac{1 - \Delta}2 \right)}
\end{cases} \, ,
\end{equation}
and let us define the new variables
\begin{equation}
\begin{cases}
x = \sqrt{K} \, \frac{2\delta + \Delta}2 \\
y = \frac{\pi}2 - \sqrt{K} \, \frac{1 - \Delta}2 \\
z = \alpha\sqrt{K} \Delta\tau
\end{cases} \, .
\end{equation}
From \eqref{nonlinearsys2} we have
\begin{equation} \label{nonlinearsysxyz}
\begin{cases}
\alpha = \frac{\sinh z \cos x}{\sin y} \\
\cosh z = \frac{\cos y}{\cos x} + \frac{\sqrt{K} \delta\cos x}{\sin y}
\end{cases} \, .
\end{equation}
Since
\begin{equation}
\sqrt{K} \delta = x - y + \frac{\pi}2 - \frac{\sqrt{K}}2 \, ,
\end{equation}
we obtain from \eqref{nonlinearsysxyz}
\begin{equation}
\frac{\sqrt{K}}2 = \frac{\pi}2 + x - y - \sin y \left( \frac{\cosh z}{\cos x} - \frac{\cos y}{\cos^2 x} \right) \, .
\end{equation}
To prove blowup with $K<\pi^2$ we look for values of $(x,y,z)$ close to $(0,0,0)$ such that
\begin{equation}
\begin{cases}
\sqrt{K} < \pi \\
\Delta > 0 \\
\delta > 0 \\
\Delta \tau < \frac12 \Delta
\end{cases} 
\Leftrightarrow
\begin{cases}
x - y - \sin y \left( \frac{\cosh z}{\cos x} - \frac{\cos y}{\cos^2 x} \right) < 0 \\
y - \frac{\pi}2 + \frac{\sqrt{K}}2 > 0 \\
x - y + \frac{\pi}2 - \frac{\sqrt{K}}2 > 0 \\
\frac{z}{\alpha} < y - \frac{\pi}2 + \frac{\sqrt{K}}2
\end{cases} 
\, ,
\end{equation}
that is,
\begin{equation} \label{ineqs}
\begin{cases}
x - y - \sin y \left( \frac{\cosh z}{\cos x} - \frac{\cos y}{\cos^2 x} \right) < 0 \\
x - \sin y \left( \frac{\cosh z}{\cos x} - \frac{\cos y}{\cos^2 x} \right) > 0 \\
\sin y \left( \frac{\cosh z}{\cos x} - \frac{\cos y}{\cos^2 x} \right) > 0 \\
\frac{\sin y}{\cos x}\frac{z}{\sinh z} < x - \sin y \left( \frac{\cosh z}{\cos x} - \frac{\cos y}{\cos^2 x} \right)
\end{cases} 
\, .
\end{equation}
We will look for solutions of these conditions in the first octant, that is, we will assume {\em a priori} that $x,y,z >0$ (and also that they are sufficiently small). For such values of $(x,y,z)$, the last condition implies the second condition. The Taylor expansions of the expressions in the remaining inequalities read
\begin{equation} \label{ineqs}
\begin{cases}
x < y \left( 1 - \frac{x^2}2 + \frac{y^2}2 + \frac{z^2}2 \right) + O(r^4) \\
- \frac{x^2}2 + \frac{y^2}2 + \frac{z^2}2 > O(r^3) \\
x > y \left( 1 + \frac{y^2}3 + \frac{z^2}3 \right) + O(r^4) 
\end{cases} 
\, ,
\end{equation}
where $r^2 = x^2 + y^2 + z^2$. We can therefore obtain solutions of \eqref{ineqs} by taking
\begin{equation}
(x,y,z) = (t + 6 t^3, t, 3t)
\end{equation}
for $t>0$ sufficiently small.
\end{proof}

\begin{Remark}
Numerically, the solution to the IBVP \eqref{IBVP} for constant $K$ appears to blow up for $K \gtrsim 1.5$, with the blow-up occurring in region $\mathcal{B}$ for $K \gtrsim 4.4$.
\end{Remark}

\begin{Remark}
It is interesting to note that the linearized version of the IBVP~\ref{IBVP}, featuring a non-homogeneous Klein-Gordon equation,
\begin{equation}
\begin{cases} \label{KleinGordon}
- \ddot{u} + u'' = K (1- 2u) \\
u(0,\lambda)=\dot{u}(0,\lambda)=0, \quad \lambda \in [0,1] \\
u(\tau,0) = u(\tau,1) = 0, \quad \tau\geq 0
\end{cases} \, ,
\end{equation}
can be written explicitly in the constant curvature case as
\begin{equation}
u(\tau,\lambda) = u_0(\lambda) + \sum_{n=1}^{+\infty} c_n(\tau) \sin(n \pi \lambda) \, ,
\end{equation}
for appropriate functions $u_0(\lambda)$ and $c_n(\tau)$ satisfying
\begin{equation}
u_0'' + 2Ku_0 = K
\end{equation}
and
\begin{equation}
\ddot{c}_n + (n^2 \pi^2 - 2K) c_n = 0 \, .
\end{equation}
Therefore, the (global in time, quasiperiodic) solutions of the IBVP~\eqref{KleinGordon} are bounded in time for $K < \frac{\pi^2}2$, and unbounded for $K \geq \frac{\pi^2}2$.
\end{Remark}
%
%
\section{The tethered galaxy problem}\label{section5}
We now consider the tethered galaxy problem, where the rod is used to connect two galaxies, modelled as point masses with rest mass $m>0$ (note that this is not exactly the same problem as that discussed in \cite{DLW03}, where a tether initially keeps the galaxies at a fixed spatial distance and then is cut). This can be seen as the relativistic analogue of a system of two masses connected by a spring.

The proper acceleration of the galaxies is (see Appendix~\ref{appendixA})
\begin{equation}
A = - u' e^{2u} \frac{\partial}{\partial \lambda} \, ,
\end{equation}
while the force on the galaxy at $\lambda=1$ is
\begin{equation}
F = p \, e^u \frac{\partial}{\partial \lambda} = \frac{\mu_0}2 \left( e^{2u} - 1 \right) e^u \frac{\partial}{\partial \lambda} \, .
\end{equation}
Therefore, at $\lambda=1$ we obtain the nonlinear boundary condition
\begin{equation}
F = m A \Leftrightarrow u' = - \frac{\mu_0}{m} \sinh u \, .
\end{equation}
At $\lambda=0$ the force is reversed, and so we obtain the same boundary condition but with a plus sign. If again we assume that the rod is initially unstretched and not moving, we are led to the following IBVP:
\begin{equation} \label{IBVP2}
\begin{cases}
- \ddot{u} + u'' = K e^{- 2u} \\
u(0,\lambda)=\dot{u}(0,\lambda)=0, \quad \quad \lambda \in [0,1] \\
u'(\tau,0) = \sigma \sinh u(\tau,0), \quad \,\,\,\, \tau\geq 0 \\
u'(\tau,1) = - \sigma \sinh u(\tau,1), \quad \tau\geq 0
\end{cases} \, ,
\end{equation}
where\footnote{Note that the IBVP~\eqref{IBVP} studied in the previous sections corresponds to the limit of this IBVP when $m \to 0$, that is, when $\sigma \to + \infty$.}
\begin{equation}
\sigma = \frac{\mu_0}{m} \, .
\end{equation}

Consider the modified energy
\begin{equation}
H(\tau) = \int_0^1 \left( \dot{u}^2 + u'^2 - K e^{-2u} \right) d\lambda + 2\sigma \Big[ \cosh u(\tau,0) + \cosh u(\tau,1) \Big] \, .
\end{equation}
Still under the assumption $\dot{K} \geq 0$, we have (recall \eqref{energycalculation})
\begin{align}
\dot{H}(\tau) & = \Big[ 2\dot{u}u' \Big]_0^1 - \int_0^1 \dot{K} e^{-2u}d\lambda + 2\sigma \Big[ \dot{u}(\tau,0)\sinh u(\tau,0) + \dot{u}(\tau,1)\sinh u(\tau,1) \Big] \nonumber \\ 
& = - \int_0^1 \dot{K} e^{-2u}d\lambda \leq 0 \, , 
\end{align}
that is, $H(\tau)$ is decreasing. This allows us to prove the following result.

\begin{Thm} \label{globalintime2}
If $\dot{K} \geq 0$ and either $K \leq 0$ or $0 \leq K \leq \kappa$, for sufficiently small $\kappa > 0$, then the solution to the IBVP \eqref{IBVP2} is global in time towards the future and bounded.
\end{Thm}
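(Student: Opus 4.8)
The plan is to follow the template of Theorems~\ref{globalintime} and the one preceding it: by the blow-up alternative for \eqref{IBVP2} (the analogue of Theorem~\ref{GWPBA} established for the nonlinear Robin conditions), it suffices to show that $M(\tau) := \|u(\tau,\cdot)\|_{L^\infty}$ stays bounded, so that the local solution extends globally. The monotonicity $\dot H(\tau)\le 0$ already proved for the modified energy gives $H(\tau)\le H(0)$, where a direct computation from the initial data yields
\[
H(0) = -\int_0^1 K(0,\lambda)\,d\lambda + 4\sigma .
\]
The crucial difference from the Dirichlet case is that the endpoints $u(\tau,0),u(\tau,1)$ are now free, so I can no longer write $u$ as $\int_0^\lambda u'$ starting from zero. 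The point is that the boundary term $2\sigma[\cosh u(\tau,0)+\cosh u(\tau,1)]$ in $H$ is precisely what controls these endpoint values.

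For the $L^\infty$ estimate I would write $u(\tau,\lambda)=u(\tau,0)+\int_0^\lambda u'(\tau,s)\,ds$ and apply Cauchy--Schwarz, obtaining $|u(\tau,\lambda)|\le |u(\tau,0)|+\|u'(\tau,\cdot)\|_{L^2}$, hence
\[
M^2 \le 2\,u(\tau,0)^2 + 2\,\|u'(\tau,\cdot)\|_{L^2}^2 .
\]
To feed the energy into this, I would use $\cosh x \ge 1+\tfrac{x^2}{2}$, so that $2\sigma(\cosh u(\tau,0)+\cosh u(\tau,1)) \ge 4\sigma + \sigma\, u(\tau,0)^2$. Combining this with $H(\tau)\le H(0)$ and discarding $\int_0^1\dot u^2\,d\lambda\ge 0$ gives
\[
\|u'\|_{L^2}^2 + 4\sigma + \sigma\, u(\tau,0)^2 \;\le\; H(0) + \int_0^1 K e^{-2u}\,d\lambda .
\]

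In the case $K\le 0$ the last integral is nonpositive, so the right-hand side is the finite constant $H(0)$; then $u(\tau,0)^2$ and $\|u'\|_{L^2}^2$ are separately bounded, and so is $M$, giving global existence at once. In the case $0\le K\le\kappa$ one has $H(0)\le 4\sigma$ and $\int_0^1 K e^{-2u}\,d\lambda \le \kappa\int_0^1 e^{-2u}\,d\lambda \le \kappa\, e^{2M}$; the baseline $4\sigma$ (coming from $\cosh 0=1$) cancels, leaving the clean bound $\|u'\|_{L^2}^2 + \sigma\,u(\tau,0)^2 \le \kappa\, e^{2M}$, whence
\[
M^2 \le 2\,u(\tau,0)^2 + 2\,\|u'\|_{L^2}^2 \le 2\kappa\Big(1+\tfrac1\sigma\Big) e^{2M} =: \tilde\kappa\, e^{2M}.
\]
This is exactly the inequality treated in Theorem~\ref{globalintime}: setting $f(x)=x^2-\tilde\kappa e^{2x}$, which has $f(0)<0$ and $f(1)>0$ as soon as $\tilde\kappa<1/e^2$, the continuity argument on $M(\tau)$ (which starts at $0$) forces $M(\tau)<1$ for all $\tau$. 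Thus it suffices to take $\kappa < \sigma/\bigl(2e^2(\sigma+1)\bigr)$, which quantifies ``sufficiently small''.

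The main obstacle, relative to the fixed-endpoint problem, is precisely that the free boundary values couple into the nonlinearity and can no longer be set to zero; the resolution rests on extracting $u(\tau,0)^2$ from the $\cosh$ boundary energy and on the exact cancellation of the constant $4\sigma$, which is what lets the estimate retain the barrier form $M^2\le \tilde\kappa e^{2M}$. I expect the only genuinely delicate point, beyond bookkeeping of these constants, to be confirming that the blow-up alternative indeed holds for the Robin conditions of \eqref{IBVP2}, i.e.\ that a uniform $L^\infty$ bound is enough to continue the solution in this nonlinear-boundary setting.
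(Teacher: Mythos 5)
Your proposal is correct, and it follows the same overall strategy as the paper: the blow-up alternative of Theorem~\ref{GWPBA} (which the appendix does establish for the nonlinear boundary conditions of \eqref{IBVP2}, so your flagged worry is already settled there), the monotonicity $H(\tau)\le H(0)$ of the modified energy, the representation $u(\tau,\lambda)=u(\tau,0)+\int_0^\lambda u'$, and a barrier/continuity argument on $\|u(\tau,\cdot)\|_{L^\infty}$. The one genuine difference is how the endpoint value is extracted from the boundary energy. The paper uses the linear bound $|u(\tau,0)|\le\cosh u(\tau,0)$, which leads to $\|u\|_{L^\infty}\le\frac{1}{4\sigma}E+E^{1/2}$ with $E=H(0)+\int_0^1 Ke^{-2u}\,d\lambda$, and hence to the barrier $g(x)=x-\frac{1}{4\sigma}(4\sigma+\kappa e^{2x})-(4\sigma+\kappa e^{2x})^{1/2}$, which is shown to be positive at some $x_0>2\sqrt{\sigma}+1$ only in the limit $\kappa\to 0$ (so the smallness of $\kappa$ is not quantified, and the resulting $L^\infty$ bound grows with $\sigma$). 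You instead use $\cosh x\ge 1+\frac{x^2}{2}$ to extract $\sigma\,u(\tau,0)^2$ from the boundary term; the constant $4\sigma$ then cancels against $H(0)\le 4\sigma$ exactly as you say, and the estimate collapses to $M^2\le 2\kappa(1+\tfrac{1}{\sigma})e^{2M}$, i.e.\ the same barrier $f(x)=x^2-\tilde\kappa e^{2x}$ already used in Theorem~\ref{globalintime}. This buys an explicit threshold $\kappa<\sigma/(2e^2(\sigma+1))$ and the uniform bound $M<1$, at the cost of nothing; your computations (including the $K\le 0$ case, where the right-hand side reduces to $-\int_0^1 K(0,\lambda)\,d\lambda$) check out.
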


\begin{proof}
According to the blow-up alternative in Theorem~\ref{GWPBA} (see Appendix~\ref{appendixB}), if $\|u(\tau,\cdot)\|_{L^\infty}$ remains bounded for $\tau>0$ then the unique solution to the IBVP can be extended globally in time towards the future. Since
\begin{equation}
u(\tau,\lambda) = u(\tau,0) + \int_0^\lambda u'(\tau,s) ds \, ,
\end{equation}
we have
\begin{equation}
\|u(\tau,\cdot)\|_{L^\infty} \leq |u(\tau,0)| + \|u'(\tau,\cdot)\|_{L^1} \leq \cosh u(\tau,0) + \|u'(\tau,\cdot)\|_{L^2} \, ,
\end{equation}
where we used the Cauchy-Schwarz inequality and $|x|<\cosh x$. The same inequality with $\cosh u(\tau,1)$ replacing $\cosh u(\tau,0)$ is obviously true, and so
\begin{equation}
\|u(\tau,\cdot)\|_{L^\infty} \leq \frac12 \Big[ \cosh u(\tau,0) + \cosh u(\tau,1) \Big] + \|u'(\tau,\cdot)\|_{L^2} \, .
\end{equation}
Consequently, we have
\begin{equation}
\|u\|_{L^\infty} \leq \frac1{4\sigma} \left( H(\tau) + \int_0^1 K e^{-2u} d\lambda \right) +  \left( H(\tau) + \int_0^1 K e^{-2u} d\lambda \right)^\frac12 \, .
\end{equation}
Since the function
\begin{equation}
f(x)=\frac1{4\sigma}x + x^\frac12
\end{equation}
is clearly increasing for $x \geq 0$, we have
\begin{equation}
\|u\|_{L^\infty} \leq \frac1{4\sigma} \left( H(0) + \int_0^1 K e^{-2u} d\lambda \right) +  \left( H(0) + \int_0^1 K e^{-2u} d\lambda \right)^\frac12 \, .
\end{equation}
If $K \leq 0$, we have
\begin{equation}
\|u\|^2_{L^\infty} \leq \frac1{4\sigma} H(0) +  H(0)^\frac12 \, ,
\end{equation}
and so $u$ is bounded. If $0 \leq K \leq \kappa$, we have
\begin{equation}
H(0) = - \int_0^1 K e^{-2u} d\lambda + 4 \sigma \leq 4\sigma \, ,
\end{equation}
and so
\begin{align}
\|u\|_{L^\infty} & \leq \frac1{4\sigma} \left( 4\sigma + \int_0^1 K e^{-2u} d\lambda \right) +  \left( 4\sigma + \int_0^1 K e^{-2u} d\lambda \right)^\frac12 \nonumber \\
& \leq \frac1{4\sigma} \left( 4\sigma + \kappa e^{2 \|u\|_{L^\infty}} \right) +  \left( 4\sigma + \kappa e^{2 \|u\|_{L^\infty}} \right)^\frac12 \, . \label{ineq2}
\end{align}
Consider the function
\begin{equation}
g(x) = x - \frac1{4\sigma} \left( 4\sigma + \kappa e^{2x} \right) -  \left( 4\sigma + \kappa e^{2x} \right)^\frac12 \, ,
\end{equation}
which satisfies
\begin{equation}
g(0)= - \frac1{4\sigma} \left( 4\sigma + \kappa \right) -  \left( 4\sigma + \kappa \right)^\frac12 < 0 \, .
\end{equation}
Given
\begin{equation}
x_0 > 2\sqrt{\sigma} + 1 \, ,
\end{equation}
we have
\begin{equation}
\lim_{\kappa \to 0} g(x_0) = x_0 - 1 - 2\sqrt{\sigma} > 0 \, ,
\end{equation}
and so $g(x_0)>0$ for sufficiently small $\kappa > 0$. Since from \eqref{ineq2} we have
\begin{equation}
g(\|u\|_{L^\infty}) \leq 0 \, ,
\end{equation}
and given that $\|u(0,\cdot)\|_{L^\infty}=0$, we conclude that, for $\kappa > 0$ sufficiently small, $\|u(\tau,\cdot)\|_{L^\infty}<x_0$ for $\tau>0$.
\end{proof}

\begin{Remark}
Note that the result in Theorem~\ref{globalintime2} is weaker than the corresponding result in Theorem~\ref{globalintime}. Physically, this difference arises because it is easier for the expansion to stretch the rod when there are masses attached to its endpoints, as the tidal forces on these masses add to the body forces stretching the rod.
\end{Remark}

Since the blow-up argument in Theorem~\ref{blowupThm} is completely independent of the boundary conditions, we also have the following result.

\begin{Thm}
If $\dot{K} \geq 0$ and $K \geq \kappa > \pi^2$ then the solution to the IBVP \eqref{IBVP2} blows up before $\tau=1/2$.
\end{Thm}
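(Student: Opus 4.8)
The plan is to observe that the blow-up mechanism of Theorem~\ref{blowupThm} is driven entirely by the initial data and by the nonlinear wave equation on the future Cauchy development $\mathcal{D}$ of the initial segment $\{0\}\times[0,1]$, and never appeals to the conditions imposed at $\lambda=0$ and $\lambda=1$. Since the IBVP~\eqref{IBVP2} has \emph{exactly the same} evolution equation $L^-L^+u=-Ke^{-2u}$ and \emph{exactly the same} initial conditions $u(0,\cdot)=\dot u(0,\cdot)=0$ as the IBVP~\eqref{IBVP}, the only difference between the two problems lies in the boundary conditions, which I claim are invisible to the argument. The first thing I would make explicit is therefore the finite-speed-of-propagation statement: $\mathcal{D}$ is the triangle with apex at $(\frac12,\frac12)$, and the backward null characteristics of $L^+$ and $L^-$ issuing from any point of $\mathcal{D}$ reach $\tau=0$ while remaining inside $[0,1]$, so that the solution on $\mathcal{D}$ is determined by the initial data alone. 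No information from $\lambda=0$ or $\lambda=1$ can enter $\mathcal{D}$ before $\tau=\frac12$.

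Granting this, I would simply rerun the bootstrap of Theorem~\ref{blowupThm} verbatim. From $K\geq\kappa$ one has $L^-L^+u\leq-\kappa e^{-2u}$ on $\mathcal{D}$; the initial conditions give $L^+u(0,\lambda)=0$ for $\lambda\in[0,1]$; and integrating along the characteristics of $L^-$ from $\tau=0$ (which stay in $\mathcal{D}$) produces the same sequence of $\lambda$-independent comparison functions $v_n(\tau)$ and $w_n(\tau)$, with $u\leq v_n$ and $L^+u\leq w_n$ on $\mathcal{D}$. Passing to the monotone limit, I obtain $u\leq v(\tau)=\log\cos(\sqrt{\kappa}\,\tau)$ on $\mathcal{D}$, where $v$ solves $\ddot v=-\kappa e^{-2v}$ with $v(0)=\dot v(0)=0$. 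Because $\kappa>\pi^2$, this comparison function diverges to $-\infty$ at $\tau_*=\frac{\pi}{2\sqrt{\kappa}}<\frac12$, forcing $u\to-\infty$ somewhere on $\mathcal{D}$ before $\tau=\frac12$, so that $\|u(\tau,\cdot)\|_{L^\infty}$ cannot remain bounded. Invoking the blow-up alternative for~\eqref{IBVP2} then yields finite-time blow-up before $\tau=\frac12$.

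The one point requiring genuine care -- and the reason the result is stated separately rather than folded into Theorem~\ref{blowupThm} -- is that the \emph{local} well-posedness and the blow-up alternative for~\eqref{IBVP2} are not the same as for~\eqref{IBVP}: the nonlinear Robin-type conditions $u'(\tau,0)=\sigma\sinh u(\tau,0)$ and $u'(\tau,1)=-\sigma\sinh u(\tau,1)$ demand the more delicate existence and uniqueness theory alluded to in the introduction. Thus the only nontrivial ingredient I would need to supply (or cite) is a local existence result for~\eqref{IBVP2} together with the statement that the unique solution extends as long as $\|u(\tau,\cdot)\|_{L^\infty}$ stays bounded. With that in hand the conclusion is immediate, since the interior comparison estimate above is completely insensitive to the choice of boundary conditions.
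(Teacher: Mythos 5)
Your proposal is correct and is exactly the paper's argument: the paper states this theorem with no separate proof, justifying it only by the remark that the blow-up mechanism of Theorem~\ref{blowupThm} lives entirely in the future Cauchy development $\mathcal{D}$ of the initial data and is therefore independent of the boundary conditions, with the blow-up alternative for~\eqref{IBVP2} supplied by Theorem~\ref{GWPBA}. Your write-up simply makes the finite-speed-of-propagation point and the reliance on the well-posedness theory for the nonlinear boundary conditions explicit.
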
 

In the constant curvature case, we also have the following result:

\begin{Thm}
If $K>0$ is constant then the solution to the IBVP~\eqref{IBVP2} blows up for some $K < \pi^2$.
\end{Thm}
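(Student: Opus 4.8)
The plan is to reduce the tethered problem to the argument already carried out in the proof of Theorem~\ref{constantcurvature}. A careful inspection of that proof reveals that the Dirichlet condition $u(\tau,0)=0$ is used in exactly one place: as the base value when integrating $L^+u$ along the $L^+$ characteristics $\lambda=\lambda_0+\tau$ in the region $\mathcal{B}$, in order to produce the bound $u\leq\log\frac{\cos(\sqrt{K}(\tau+\lambda)/2)}{\cos(\sqrt{K}(\tau-\lambda)/2)}$ on $\mathcal{B}$. Every other step — the exact solution $u=\log\cos(\sqrt{K}\tau)$ on $\mathcal{D}$, the monotonicity of $L^+u$ along $L^-$ characteristics arising from $L^-L^+u=-Ke^{-2u}\leq 0$, the estimates on the segment $\mathcal{L}$, the ODE comparison on its development $\mathcal{D}'$, and the tuning of $(x,y,z)=(t+6t^3,t,3t)$ — makes no reference to the boundary conditions. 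Since integrating from a base value that is $\leq 0$ rather than $=0$ only \emph{strengthens} the resulting upper bound, it suffices to replace the identity $u(\tau,0)=0$ by the inequality $u(\tau,0)\leq 0$.

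Thus the single new ingredient is the sign estimate $u(\tau,0)\leq 0$ (and, by the symmetry $\lambda\mapsto 1-\lambda$ of \eqref{IBVP2}, also $u(\tau,1)\leq 0$), which I would establish as follows. As in the proof of Theorem~\ref{blowupThm}, the value of $u$ on the Cauchy development $\mathcal{D}$ is determined by the initial data alone and is therefore independent of the boundary conditions; in the constant curvature case this gives $u=\log\cos(\sqrt{K}\tau)$ and $L^+u=-\sqrt{K}\tan(\sqrt{K}\tau)$ on $\mathcal{D}$, in particular on the characteristic edge $\tau=\lambda$. Since $L^-L^+u=-Ke^{-2u}\leq 0$, the quantity $L^+u$ is nonincreasing along the $L^-$ characteristics $\tau+\lambda=\text{const}$; tracing the characteristic through a boundary point $(\tau,0)$ back to its intersection $(\tau/2,\tau/2)$ with the edge $\tau=\lambda$ (which lies in $\overline{\mathcal{D}}$ for $0\leq\tau\leq 1$) yields
\begin{equation*}
L^+u(\tau,0)\leq-\sqrt{K}\tan\!\left(\sqrt{K}\,\tfrac{\tau}{2}\right)\leq 0,\qquad 0\leq\tau\leq 1.
\end{equation*}
Writing $\phi(\tau)=u(\tau,0)$ and using the boundary condition $u'(\tau,0)=\sigma\sinh u(\tau,0)$, we have $L^+u(\tau,0)=\dot\phi(\tau)+\sigma\sinh\phi(\tau)$, so the estimate above becomes the differential inequality $\dot\phi\leq-\sigma\sinh\phi$ with $\phi(0)=0$. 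Comparison with the solution of $\dot\psi=-\sigma\sinh\psi$, $\psi(0)=0$, namely $\psi\equiv 0$, then gives $\phi(\tau)\leq 0$ for $0\leq\tau\leq 1$, as claimed.

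With this in hand the argument of Theorem~\ref{constantcurvature} applies essentially verbatim: every $L^+$ characteristic feeding a point of $\mathcal{B}$ meets $\lambda=0$ at a time $\tau-\lambda\in[0,1]$, so the base value $u(\cdot,0)\leq 0$ is controlled on the entire range needed, the $\mathcal{B}$-estimates on $u$ and $L^+u$ are unchanged, and the ODE comparison on $\mathcal{D}'$ (which is a purely interior region near $\lambda=\tfrac12$) delivers blow-up for some $K<\pi^2$. The main obstacle is precisely this boundary sign estimate: the nonlinear Robin condition prevents one from reading off $u(\tau,0)$ directly, and the characteristic-plus-comparison argument above is what must replace the trivial Dirichlet value before the rest of the reasoning can be transplanted.
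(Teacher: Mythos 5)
Your proposal is correct and follows essentially the same route as the paper: both reduce the problem to establishing the sign estimate $u(\tau,0)\leq 0$ by noting that $L^+u\leq 0$ propagates from the Cauchy development $\mathcal{D}$ along the $L^-$ characteristics to the boundary $\lambda=0$, then combining the Robin condition with the ODE comparison $\dot z=-\sigma\sinh z$, $z(0)=0$, whose solution is $z\equiv 0$. The rest of the argument of Theorem~\ref{constantcurvature} then carries over unchanged, exactly as you describe.
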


\begin{proof}
The proof of Theorem~\ref{constantcurvature} applies in this case if we can establish that $u(\tau,0)\leq 0$ for $\tau \in [0,1]$. Now from the expression of the exact solution in region $\mathcal{D}$ we have $L^+u \leq 0$ in this region, and by integrating $L^-L^+ u = - K e^{-2u}$ from the boundary of $\mathcal{D}$ into the region $\mathcal{B}$ we obtain $L^+ u \leq 0$ in this region. Therefore we have
\begin{equation}
\dot{u}(\tau,0) + \sigma \sinh u(\tau,0) = \dot{u}(\tau,0) + u'(\tau,0) = L^+(\tau,0) \leq 0
\end{equation}
for $\tau \in [0,1]$. Since $u(0,0)=0$, we conclude from the usual comparison theorem for ODEs that $u(\tau,0)\leq z(\tau)$, where $z(\tau)$ is the solution to the ODE Cauchy problem
\begin{equation}
\begin{cases}
\dot{z} = - \sigma\sinh z \\
z(0)=0
\end{cases} \, ,
\end{equation}
that is, $z(\tau) \equiv 0$.
\end{proof}

In other words, the qualitative behavior of the tethered galaxy system in response to the cosmological expansion is exactly the same as that of the rod alone. Taken together, these two examples provide mathematically rigorous support for the widely held view that sufficiently small bound systems do not follow the Hubble flow, whereas larger systems may do so.
%
%
\section*{Acknowledgements}
This work was partially supported by FCT/Portugal through CAMGSD, IST-ID (projects UIDB/04459/2020 and UIDP/04459/2020) and project 2024.04456.CERN, and also by the H2020-MSCA-2022-SE project EinsteinWaves, GA no.~101131233. 
%
%
\appendix
\section{Geometry calculations}\label{appendixA}
The metric \eqref{FLRWmetric2D} can be written as
\begin{equation}
ds^2 = \left(\omega^0\right)^2 + \left(\omega^1\right)^2 \quad \text{with} \quad \omega^0 = dt \, , \quad \omega^1 = a(t) dx \, .
\end{equation}
From
\begin{equation}
\begin{cases}
d\omega^0 = 0 = - \omega^0_{\,\,\,\,1} \wedge \omega^1 \\
d\omega^1 = \dot{a} dt \wedge dx = - \omega^1_{\,\,\,\,0} \wedge \omega^0
\end{cases}
\end{equation}
we obtain the connection form
\begin{equation}
\omega^0_{\,\,\,\,1} = \omega^1_{\,\,\,\,0} = \dot{a} \, dx \, ,
\end{equation}
and consequently the curvature form
\begin{equation}
\Omega^0_{\,\,\,\,1} = d\omega^0_{\,\,\,\,1} = \ddot{a} \, dt \wedge dx = \frac{\ddot{a}}{a} \, \omega^0 \wedge \omega^1\, ,
\end{equation}
so that the Gauss curvature is
\begin{equation} \label{Gausscurvature}
K =  \frac{\ddot{a}}{a} \, .
\end{equation}
The metric \eqref{proper}, on the other hand, can be written as
\begin{equation}
ds^2 = \left(\omega^0\right)^2 + \left(\omega^1\right)^2 \quad \text{with} \quad \omega^0 = e^{-u} d\tau \, , \quad \omega^1 = e^{-u} d\lambda \, .
\end{equation}
From
\begin{equation}
\begin{cases}
d\omega^0 = - u' e^{-u} d\lambda \wedge d\tau = - \omega^0_{\,\,\,\,1} \wedge \omega^1 \\
d\omega^1 = - \dot{u} e^{-u} d\tau \wedge d\lambda = - \omega^1_{\,\,\,\,0} \wedge \omega^0
\end{cases}
\end{equation}
we obtain the connection form
\begin{equation}
\omega^0_{\,\,\,\,1} = \omega^1_{\,\,\,\,0} = - u' d\tau - \dot{u} \, d\lambda \, ,
\end{equation}
and consequently the curvature form
\begin{equation}
\Omega^0_{\,\,\,\,1} = d\omega^0_{\,\,\,\,1} = - u'' d\lambda \wedge d\tau - \ddot{u} \, d\tau \wedge d\lambda = e^{2u} (-\ddot{u} + u'') \, \omega^0 \wedge \omega^1\, ,
\end{equation}
so that the Gauss curvature is
\begin{equation} \label{Gausscurvature}
K = e^{2u} (-\ddot{u} + u'') \, .
\end{equation}
Finally, note that if
\begin{equation}
E_0 = e^{u} \frac{\partial}{\partial \tau} \, , \qquad E_1 = e^{u} \frac{\partial}{\partial \lambda}
\end{equation}
is the orthonormal frame dual to $\{\omega^0,\omega^1\}$ then the covariant acceleration of the observers at rest with respect to the rod is
\begin{equation}
\nabla_{E_0} E_0 =  \omega^1_{\,\,\,\,0}(E_0) E_1 = - u' e^{u} E_1 \, .
\end{equation}
\section{Existence and uniqueness results}\label{appendixB}
In this appendix we prove an existence and uniqueness result for classical solutions of the IBVPs considered in this work.

\begin{Thm} \label{GWPBA}
The IBVPs~\eqref{IBVP} and \eqref{IBVP2} admit unique $C^2$ solutions on maximal time intervals $[0,T)$, where either $T = + \infty$ or $\limsup_{t {\scriptscriptstyle \nearrow} T} \| u(\tau, \cdot)\|_{L^\infty} = + \infty$.
\end{Thm}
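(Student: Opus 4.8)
The plan is to construct the solutions of both \eqref{IBVP} and \eqref{IBVP2} by the method of characteristics and a contraction argument on a short time slab, and then to promote local existence to the stated blow-up alternative by means of an \emph{a priori} estimate showing that an $L^\infty$ bound on $u$ controls its full $C^2$ norm.

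\textbf{Reformulation as transport equations.} First I would pass to the Riemann invariants $p = L^+u = \dot u + u'$ and $q = L^-u = \dot u - u'$, using the null fields $L^\pm$ from \eqref{L+-}. Since $L^-L^+u = L^+L^-u$, the nonlinear wave equation becomes the pair of transport equations $L^-p = L^+q = -Ke^{-2u}$, with $p$ carried along the characteristics of $L^-$ and $q$ along those of $L^+$, each accumulating the integral of the source; the field $u$ is then recovered from $\dot u = \tfrac12(p+q)$ together with $u(0,\cdot)=0$, and the initial conditions translate into $p(0,\cdot)=q(0,\cdot)=0$. At each boundary point the invariant carried in from the interior is known, and the boundary condition fixes the outgoing one. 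For the Dirichlet data of \eqref{IBVP} this reflection is the algebraic sign flip $p=-q$ (because $u=0$ on the boundary forces $\dot u =0$ there), whereas for \eqref{IBVP2} the relation $u' = \pm\sigma\sinh u$ combined with $\dot u = \tfrac12(p+q)$ along the boundary yields a first-order ODE for the trace $\tau\mapsto u(\tau,0)$ (resp. $u(\tau,1)$), driven by the incoming invariant.

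\textbf{Local existence and uniqueness.} On a slab $[0,\delta]\times[0,1]$ I would set up the solution map $\Phi$ that sends a candidate $u$ to the solution of the transport system with frozen source $-Ke^{-2u}$ and the (linear or ODE) boundary reflections just described, and reconstruct $\tilde u = \Phi(u)$. Working in $C^1$ for the pair $(p,q)$, equivalently in $C^2$ for $u$, the maps $u\mapsto -Ke^{-2u}$ and $u\mapsto \sigma\sinh u$ are smooth, hence locally Lipschitz, while each integration along a characteristic or along a boundary ODE contributes a factor $O(\delta)$; thus $\Phi$ is a contraction on a closed ball for $\delta$ small enough, and the Banach fixed point theorem yields a unique $C^2$ solution, with $\delta$ depending only on a bound for the $C^2$ norm of the data at the initial time. (For genuine $C^2$ regularity up to the two corners $(0,0)$ and $(0,1)$ in the Dirichlet problem one must also impose the compatibility condition $K=0$ there, obtained by matching $\ddot u = -K$ from the equation with $\ddot u = 0$ from the boundary; otherwise the solution is $C^2$ in the interior and $C^1$ up to the boundary, which is all the later arguments require.)

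\textbf{Blow-up alternative and main obstacle.} Suppose $T<\infty$ and $\|u(\tau,\cdot)\|_{L^\infty}\leq M$ on $[0,T)$. Since then $|Ke^{-2u}|\leq \|K\|_{L^\infty}e^{2M}$, integrating the transport equations — using that the Dirichlet reflections preserve the size of the invariants and that the boundary ODEs of \eqref{IBVP2} have right-hand sides bounded in terms of $M$ — gives a bound on $\|p\|_{L^\infty}$ and $\|q\|_{L^\infty}$, hence on $\dot u$ and $u'$, on $[0,T)$. Differentiating the transport equations once more, the quantities $\dot p, p', \dot q, q'$ satisfy transport equations whose sources are linear in the already-controlled first derivatives of $u$ (and in the prescribed derivatives of $K$), so a Grönwall estimate bounds them and therefore $\sup_{\tau<T}\|u(\tau,\cdot)\|_{C^2}<\infty$; consequently $u(\tau,\cdot)$ admits a $C^2$ limit as $\tau\to T$, providing valid Cauchy data at $\tau=T$ from which the local theory extends the solution past $T$, contradicting maximality. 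The delicate point I expect throughout is the nonlinear boundary conditions of \eqref{IBVP2}: there the reflection of characteristics is not an algebraic sign flip but a nonlinear ODE coupling the incoming invariant to the boundary trace of $u$, and one must verify that this ODE is well posed and Lipschitz uniformly under $L^\infty$ bounds, that it is compatible with the contraction estimate, and that it survives the differentiation needed for the $C^2$ bound; carefully tracking the reflected characteristics emanating from the corners is the remaining technical wrinkle.
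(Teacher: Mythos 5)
Your proposal follows essentially the same route as the paper: pass to the Riemann invariants $L^{\pm}u$ from \eqref{L+-}, integrate the resulting transport equations along characteristics with the boundary conditions supplying the reflection law (algebraic sign flip for \eqref{IBVP}, a relation coupling the outgoing invariant to the boundary trace of $u$ for \eqref{IBVP2}), run a contraction on a short time slab, and continue. The one substantive implementation difference is where the contraction lives: the paper works in $\left(C^0\right)^3$ for $(u,L^+u,L^-u)$, so that the slab length depends only on $\|u\|_{L^\infty}$, $\|\dot u\|_{L^\infty}$, $\|u'\|_{L^\infty}$ and $\|K\|_{L^\infty}$, recovers $C^2$ regularity a posteriori by a potential-function argument, and closes the blow-up alternative by reading off from the integral equations that $\dot u$ and $u'$ cannot blow up while $u$ stays bounded; you contract at the $C^1$ level for the invariants and close the alternative with a Gr\"onwall estimate on the differentiated transport system, which is also valid but requires the extra second-derivative estimate you describe (including differentiating the nonlinear reflection law, which is harmless under the first-derivative bounds). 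Your parenthetical about the corners is a genuine catch rather than a defect of your argument: for the Dirichlet problem the second-order compatibility fails unless $K(0,0)=K(0,1)=0$, since the equation forces $\ddot u(0,\lambda)=-K(0,\lambda)$ while the boundary condition forces $\ddot u(\tau,0)=0$, so in general the solution is only piecewise $C^2$ across the characteristics emanating from the corners; the paper's compatibility conditions \eqref{compatibility1}--\eqref{compatibility2} are first order and do not address this, so its $C^2$ claim for \eqref{IBVP} should be read with exactly the caveat you state.
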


\begin{proof}
We present only the proof for the more complicated case of the IBVP~\eqref{IBVP2}; the proof for the simpler IBVP~\eqref{IBVP} is a trivial adaptation.

Let us consider first the IBVP
\begin{equation} \label{IBVP3}
\begin{cases}
- \ddot{u} + u'' = K e^{- 2u} \\
u(\tau_0,\lambda)=f(\lambda), \quad \quad \lambda \in [0,1] \\
\dot{u}(\tau_0,\lambda)=g(\lambda), \quad \quad \lambda \in [0,1] \\
u'(\tau,0) = \sigma \sinh u(\tau,0), \quad \,\,\,\, \tau\in [\tau_0, \tau_1] \\
u'(\tau,1) = - \sigma \sinh u(\tau,1), \quad \tau\in [\tau_0, \tau_1]
\end{cases} \, ,
\end{equation}
where we assume that $\Delta \tau = \tau_1 - \tau_0 < \frac12$ and that $f \in C^2([0,1])$ and $g \in C^1([0,1])$ satisfy the compatibility conditions
\begin{equation} \label{compatibility1}
\begin{cases}
f'(0)=\sigma \sinh f(0) \\
f'(1)=-\sigma \sinh f(1)
\end{cases}
\end{equation}
and
\begin{equation} \label{compatibility2}
\begin{cases}
g'(0)=\sigma g(0) \cosh f(0) \\
g'(1)=-\sigma g(1) \cosh f(1)
\end{cases} \, .
\end{equation}
To solve this problem, we introduce the new variables
\begin{equation}
v = L^+ u \, , \qquad w = L^- u \, ,
\end{equation}
where $L^+$ and $L^-$ are the vector fields defined in \eqref{L+-}, and study the first order IBVP
\begin{equation} \label{IBVPfirstorder}
\begin{cases}
\dot{u} = \frac12 (v + w) \\
L^- v = - K e^{- 2u} \\
L^+ w = - K e^{- 2u} \\
u(\tau_0,\lambda)=f(\lambda), \quad \quad \lambda \in [0,1] \\
v(\tau_0,\lambda)=g(\lambda) + f'(\lambda), \quad \quad \lambda \in [0,1] \\
w(\tau_0,\lambda)=g(\lambda) - f'(\lambda), \quad \quad \lambda \in [0,1] \\
v(\tau,0) - w(\tau,0) = 2\sigma \sinh u(\tau,0), \quad \,\,\,\, \tau\in [\tau_0, \tau_1] \\
w(\tau,1) - v(\tau,1) = 2\sigma \sinh u(\tau,1), \quad \tau\in [\tau_0, \tau_1]
\end{cases} \, ,
\end{equation}which we write as the following system of integral equations:
\begin{equation} \label{IBVP4}
\begin{cases}
u(\tau,\lambda) = f(\lambda) + \frac12 \int_{\tau_0}^\tau \left[ v(s,\lambda) + w(s,\lambda) \right] ds \\
v(\tau,\lambda)=g(\lambda + \tau - \tau_0) + f'(\lambda + \tau - \tau_0) - \int_{\tau_0}^\tau K(s,\lambda+\tau-s) e^{- 2u(s,\lambda+\tau-s)} ds \, , \quad \tau + \lambda \leq 1 + \tau_0 \\
v(\tau,\lambda)= w_\rig(\tau, \lambda) - 2 \sigma \sinh u_\rig(\tau,\lambda) - \int_{\tau + \lambda - 1}^\tau K(s,\lambda+\tau-s) e^{- 2u(s,\lambda+\tau-s)} ds \, , \quad \tau + \lambda\geq 1 + \tau_0 \\
w(\tau,\lambda)=g(\lambda - \tau + \tau_0) - f'(\lambda - \tau + \tau_0) - \int_{\tau_0}^\tau K(s,\lambda-\tau+s) e^{- 2u(s,\lambda-\tau+s)} ds \, , \quad \tau - \lambda \leq \tau_0 \\
w(\tau,\lambda)=v_\lef(\tau,\lambda) - 2 \sigma \sinh u_\lef(\tau,\lambda) - \int_{\tau - \lambda}^\tau K(s,\lambda-\tau+s) e^{- 2u(s,\lambda-\tau+s)} ds \, , \quad \tau - \lambda \geq \tau_0 \\
u_\lef(\tau,\lambda) = f(0) + \frac12 \int_{\tau_0}^{\tau-\lambda} \left[ v(s,0) + w(s,0) \right] ds \\
u_\rig(\tau,\lambda) = f(1) + \frac12 \int_{\tau_0}^{\tau+\lambda-1} \left[ v(s,1) + w(s,1) \right] ds \\
v_\lef(\tau,\lambda)=g(\tau - \lambda - \tau_0) + f'(\tau - \lambda - \tau_0) - \int_{\tau_0}^{\tau-\lambda} K(s,\tau-\lambda-s) e^{- 2u(s,\tau-\lambda-s)} ds \\
w_\rig(\tau,\lambda)=g(2 - \tau - \lambda + \tau_0) - f'(2 - \tau - \lambda + \tau_0) - \int_{\tau_0}^{\tau+\lambda-1} K(s,2 - \tau - \lambda+s) e^{- 2u(s,2 - \tau - \lambda+s)} ds
\end{cases} \, .
\end{equation}

Here the first equation states that $u$ is the time integral of its time derivative, and the next four equations give $u$ and $v$ by integrating the equation along the characteristics, starting either at the initial time or at the boundaries; the functions $u_\lef$, $u_\rig$, $v_\lef$ and $w_\rig$ are the boundary values needed to start the integration at the boundaries, and are themselves obtained by integrations that always start at the initial time (since $\Delta \tau < \frac12$).

Consider the Banach space $X= \left(C^0\left([\tau_0, \tau_1] \times [0,1]\right)\right)^3$ with the norm
\begin{equation}
\|(u,v,w)\| = \| u \|_{L^\infty} + \| v \|_{L^\infty} + \| w \|_{L^\infty} \, ,
\end{equation}
and let $\cF:X \to X$ be the obvious operator implicit in~\eqref{IBVP4}, whose fixed points are the solutions of \eqref{IBVPfirstorder} (note that the compatibility conditions~\eqref{compatibility1} ensure that $\cF$ indeed maps continuous functions to continuous functions). If we write
\begin{equation}
\cF(u,v,w) = (\bar{u},\bar{v},\bar{w}) \, ,
\end{equation}
and assume that
\begin{equation}
\|(u,v,w)\| \leq R \, ,
\end{equation}
for a suitable constant $R>0$, then we have
\begin{equation}
\| \bar{u} \|_{L^\infty} \leq \| f \|_{L^\infty} + R \Delta \tau \, .
\end{equation}
Similarly, we have, for $\tau + \lambda \leq 1 + \tau_0$,
\begin{equation}
| \bar{v} | \leq \| g \|_{L^\infty} + \| f' \|_{L^\infty} + \| K \|_{L^\infty} e^{2R} \Delta \tau \, ,
\end{equation}
and, for $ \tau + \lambda\geq 1 + \tau_0$,
\begin{equation}
| \bar{v} | \leq \| w_\rig \|_{L^\infty} + 2 \sigma \| \sinh u_\rig \|_{L^\infty} + \| K \|_{L^\infty} e^{2R} \Delta \tau \, .
\end{equation}
Since
\begin{equation}
\| w_\rig \|_{L^\infty} \leq \| g \|_{L^\infty} + \| f' \|_{L^\infty} + \| K \|_{L^\infty} e^{2R} \Delta \tau \, ,
\end{equation}
and
\begin{equation}
\| u_\rig \|_{L^\infty} \leq \| f \|_{L^\infty} + R \Delta \tau \, ,
\end{equation}
we conclude that
\begin{equation}
\| \bar{v} \|_{L^\infty} \leq \| g \|_{L^\infty} + \| f' \|_{L^\infty} + 2 \| K \|_{L^\infty} e^{2R} \Delta \tau + 2 \sigma \sinh \left( \| f \|_{L^\infty} + R \Delta \tau \right)  \, ,
\end{equation}
and similarly
\begin{equation}
\| \bar{w} \|_{L^\infty} \leq \| g \|_{L^\infty} + \| f' \|_{L^\infty} + 2 \| K \|_{L^\infty} e^{2R} \Delta \tau + 2 \sigma \sinh \left( \| f \|_{L^\infty} + R \Delta \tau \right)  \, .
\end{equation}
We then have
\begin{equation}
\| (\bar{u},\bar{v},\bar{w}) \| \leq  \| f \|_{L^\infty} + 2\| g \|_{L^\infty} + 2\| f' \|_{L^\infty} + R \Delta \tau  + 4 \| K \|_{L^\infty} e^{2R} \Delta \tau + 4 \sigma \sinh \left( \| f \|_{L^\infty} + R \Delta \tau \right)  \, ,
\end{equation}
and so if we choose
\begin{equation} \label{sizeofR}
R =  \| f \|_{L^\infty} + 2\| g \|_{L^\infty} + 2\| f' \|_{L^\infty} + 4 \sigma \sinh \left( \| f \|_{L^\infty} \right) + 1 \, ,
\end{equation}
and then $\Delta \tau$ sufficiently small so that
\begin{equation} \label{sizeofdeltatau}
R \Delta \tau  + 4 \| K \|_{L^\infty} e^{2R} \Delta \tau + 4 \sigma \sinh \left( \| f \|_{L^\infty} + R \Delta \tau \right) - 4 \sigma \sinh \left( \| f \|_{L^\infty} \right) \leq 1 \, ,
\end{equation}
we have
\begin{equation}
\| \cF(u,v,w) \| \leq  R  \, ,
\end{equation}
that is, $\cF$ maps $\overline{B_R(0)}$ to itself.

Let us take $(u_1,v_1,w_1)$ and $(u_2,v_2,w_2)$ in $\overline{B_R(0)}$. We have
\begin{equation}
\| \bar{u}_1 - \bar{u}_2 \|_{L^\infty} \leq \Delta \tau \| v_1 - v_2 \|_{L^\infty} + \Delta \tau \| w_1 - w_2 \|_{L^\infty}
\end{equation}
and, for $\tau + \lambda \leq 1 + \tau_0$,
\begin{equation}
| \bar{v}_1 - \bar{v}_2 | \leq 2 \Delta \tau  \| K \|_{L^\infty} e^{2R} \| u_1 - u_2 \|_{L^\infty} \, .
\end{equation}
For $ \tau + \lambda\geq 1 + \tau_0$, on the other hand, we have
\begin{equation}
| \bar{v}_1 - \bar{v}_2 | \leq \| w_{\rig,1} - w_{\rig,2} \|_{L^\infty} + 2 \sigma \cosh R \| u_{\rig,1} - u_{\rig,2} \|_{L^\infty} + 2 \Delta \tau  \| K \|_{L^\infty} e^{2R} \| u_1 - u_2 \|_{L^\infty} \, .
\end{equation}
Since
\begin{equation}
\| w_{\rig,1} - w_{\rig,2} \|_{L^\infty} \leq 2 \Delta \tau  \| K \|_{L^\infty} e^{2R} \| u_1 - u_2 \|_{L^\infty} \, ,
\end{equation}
and
\begin{equation}
\| u_{\rig,1} - u_{\rig,2} \|_{L^\infty} \leq \Delta \tau \| v_1 - v_2 \|_{L^\infty} + \Delta \tau \| w_1 - w_2 \|_{L^\infty} \, ,
\end{equation}
we conclude that
\begin{equation}
\| \bar{v}_1 - \bar{v}_2 \|_{L^\infty} \leq 4 \Delta \tau  \| K \|_{L^\infty} e^{2R} \| u_1 - u_2 \|_{L^\infty} + 2\sigma \Delta \tau \cosh R \| v_1 - v_2 \|_{L^\infty} + 2\sigma \Delta \tau \cosh R \| w_1 - w_2 \|_{L^\infty}  \, ,
\end{equation}
and similarly
\begin{equation}
\| \bar{w}_1 - \bar{w}_2 \|_{L^\infty} \leq 4 \Delta \tau  \| K \|_{L^\infty} e^{2R} \| u_1 - u_2 \|_{L^\infty} + 2\sigma \Delta \tau \cosh R \| v_1 - v_2 \|_{L^\infty} + 2\sigma \Delta \tau \cosh R \| w_1 - w_2 \|_{L^\infty}  \, .
\end{equation}
In other words, we have
\begin{equation}
\| (\bar{u}_1,\bar{v}_1,\bar{w}_1) - (\bar{u}_2,\bar{v}_2,\bar{w}_2) \| \leq  M \Delta \tau \| (u_1, v_1, w_1) - (u_2, v_2, w_2) \|  \, ,
\end{equation}
where
\begin{equation} \label{sizeofM}
M = \max \left\{ 8 \| K \|_{L^\infty} e^{2R}, 1 + 4 \sigma \cosh R \right\}  \, .
\end{equation}
We conclude that $\cF$ is a contraction if $\Delta \tau$ is chosen small enough. Note from \eqref{sizeofR}, \eqref{sizeofdeltatau} and \eqref{sizeofM} that the choice of $\Delta \tau$ depends only on $\| f \|_{L^\infty}$, $\| g \|_{L^\infty}$, $\| f' \|_{L^\infty}$ and $\| K \|_{L^\infty}$.

The fixed point of $\cF$ yields a solution of the system of integral equations~\eqref{IBVP4} in the space of continuous functions. Since these equations imply
\begin{equation}
L^- v = L^+w  \, ,
\end{equation}
there exists a $C^1$ function $\tilde{u}$ such that\footnote{Recall that if $P(x,y)$ and $Q(x,y)$ are continuous and $\frac{\partial P}{\partial y} = \frac{\partial Q}{\partial x}$, where these partial derivatives exist and are continuous, then $\phi(x,y) = \int_0^x P(s,0) ds + \int_0^y Q(x,s) ds$ is a $C^1$ potential for $(P,Q)$.}
\begin{equation}
L^+\tilde{u} = v \, , \qquad L^-\tilde{u} = w \, ,
\end{equation}
and consequently
\begin{equation}
\dot{\tilde{u}} = L^+\tilde{u} + L^-\tilde{u} = v + w = \dot{u} \, ,
\end{equation}
that is,
\begin{equation} \label{uutilde}
\tilde{u}(\tau,\lambda) = u(\tau,\lambda) + h(\lambda)
\end{equation}
for some $C^1$ function $h$ (recall that $u(0,\lambda)=f(\lambda)$ is $C^2$). Since from~\eqref{IBVP4} we have
\begin{equation}
L^+u(\tau_0,\lambda) = f'(\lambda) + \frac12 v(\tau_0,\lambda) + \frac12 w(\tau_0,\lambda) = f'(\lambda) + g(\lambda) = v(\tau_0,\lambda) = L^+ \tilde{u}(\tau_0,\lambda) \, ,
\end{equation}
we obtain from \eqref{uutilde} that $h'(\lambda)=0$, that is, $u$ and $\tilde{u}$ differ by a constant, and so
\begin{equation}
L^+u = v \, , \qquad L^-u = w \, .
\end{equation}
Consequently, $u$ is in fact $C^1$, and so our hypotheses on the regularity of the initial data $f$ and $g$, together with \eqref{IBVP4}, show that $v$ and $w$ are also of class $C^1$ (note that this is not obvious along the curves $\tau + \lambda = 1 + \tau_0$ and $\tau - \lambda = \tau_0$, where the compatibility condition~\eqref{compatibility2} must be used). From this it is easily shown that $u$ is a $C^2$ solution of the IBVP~\eqref{IBVP3}.

Going back to the original IBVP~\eqref{IBVP2}, we see that its initial data satisfies the compatibility conditions~\eqref{compatibility1} and \eqref{compatibility2}. Consequently, it has a $C^2$ solution for $\tau \in [0, \Delta \tau]$. If we subsequently set $\tau_0 = \Delta \tau$, $f(\lambda)=u(\tau_0,\lambda)$ and $g(\lambda)=\dot{u}(\tau_0,\lambda)$, then $f$ and $g$ satisfy the compatibility conditions~\eqref{compatibility1} and \eqref{compatibility2}, and so the IBVP~\eqref{IBVP3} has a $C^2$ solution for $\tau \in [\tau_0, \tau_0 + \Delta \tau']$, meaning that the original IBVP~\eqref{IBVP2} has a $C^2$ solution for $\tau \in [0, \Delta \tau + \Delta \tau']$. Iterating this procedure, we can keep extending the solution to larger and larger time intervals. Since the time interval by which we can extend depends only on $\| f \|_{L^\infty}$, $\| g \|_{L^\infty}$, $\| f' \|_{L^\infty}$ and $\| K \|_{L^\infty}$, and the latter is smooth for $\tau \in [0, +\infty)$, we see that if $\| u(\tau, \cdot) \|_{L^\infty}$, $\| \dot{u}(\tau, \cdot) \|_{L^\infty}$ and $\| u'(\tau, \cdot) \|_{L^\infty}$ remain bounded then the solution is global in time. On the other hand, it is clear from the system of integral equations~\eqref{IBVP4} that $\| \dot{u}(\tau, \cdot) \|_{L^\infty}$ and $\| u'(\tau, \cdot) \|_{L^\infty}$ canot blow up if $\| u(\tau, \cdot) \|_{L^\infty}$ remains bounded. We conclude that either the solution is global or $\| u(\tau, \cdot) \|_{L^\infty}$ blows up.

Finally, we show that the solution is unique: given two $C^2$ solutions $u_1$ and $u_2$ defined on some compact time interval $[0, \tau_1]$, we take $R$ large enough such that both $(u_1,L^+{u}_1,L^-u_1)$ and $(u_2,L^+u_2,L^-u_2)$ are in the closed ball $\overline{B_R(0)}$ of the space $\left(C^0\left([0, \tau_1] \times [0,1]\right)\right)^3$, and choose $\Delta \tau = \tau_1 / N$ (for some $N \in \bbN$) small enough such that $\cF$ is a contraction on the closed ball $\overline{B_R(0)}$ of the space $\left(C^0\left([0,\Delta \tau] \times [0,1]\right)\right)^3$. Since both $(u_1,L^+{u}_1,L^-u_1)$ and $(u_2,L^+u_2,L^-u_2)$ are fixed points of $\cF$, they must coincide for $\tau \in [0,\Delta \tau]$. Iterating this procedure for time intervals of the form $[ k \Delta \tau, (k + 1) \Delta\tau]$, with $k = 1, \ldots, N-1$, we conclude that $u_1$ and $u_2$ coincide on $[0, \tau_1]$. 
\end{proof}

\begin{Remark}
The existence and uniqueness result for the IBVP~\eqref{IBVP}, under weaker regularity assumptions (namely $H^1 \times L^2$), can be obtained from the general result in \cite[Theorem~6.2.2]{CH98}; to the best of our knowledge, a similar result for nonlinear boundary conditions, applicable to the IBVP~\eqref{IBVP2}, is not available in the literature.
\end{Remark}
%
%
%

\end{document}